  \providecommand\BibTeX{{%
    \normalfont B\kern-0.5em{\scshape i\kern-0.25em b}\kern-0.8em\TeX}}}
\let\oldnl\nl
\newcommand{\nonl}{\renewcommand{\nl}{\let\nl\oldnl}}
\newtheorem{thm}{Theorem}
\newtheorem{lemma}{Lemma}
\begin{document}
\title{Hypergraph Motifs: Concepts, Algorithms, and Discoveries}

\numberofauthors{3} 

\author{
	%
	%
	\alignauthor
	Geon Lee\\
	\affaddr{KAIST AI}\\
	\email{geonlee0325@kaist.ac.kr}
	\alignauthor
	Jihoon Ko\\
	\affaddr{KAIST AI}\\
	\email{jihoonko@kaist.ac.kr}
	\alignauthor Kijung Shin \\\
	\affaddr{KAIST AI \& EE}\\
	\email{kijungs@kaist.ac.kr}
}

\newcommand\kijung[1]{\textcolor{red}{[Kijung:#1]}}
\newcommand\geon[1]{\textcolor{blue}{[Geon:#1]}}
\newcommand\red[1]{\textcolor{red}{#1}}
\newcommand\blue[1]{\textcolor{blue}{#1}}
\newcommand\change[1]{\textcolor{black}{#1}}
\newcommand\nochange[1]{\textcolor{black}{#1}}
\newcommand\Motif{H-motif\xspace}
\newcommand\Motifs{H-motifs\xspace}
\newcommand\motif{h-motif\xspace}
\newcommand\motifs{h-motifs\xspace}
\newcommand\method{\textsf{MoCHy}\xspace}
\newcommand\methodE{\textsf{MoCHy-E}\xspace}
\newcommand\methodEN{\textsf{MoCHy-E\textsubscript{ENUM}}\xspace}
\newcommand\methodA{\textsf{MoCHy-A}\xspace}
\newcommand\methodAE{\textsf{MoCHy-A}\xspace}
\newcommand\methodAW{\textsf{MoCHy-A\textsuperscript{+}}\xspace}

\newcommand\methodX{MoCHy\xspace}
\newcommand\methodEX{MoCHy-E\xspace}
\newcommand\methodENX{MoCHy-E\textsubscript{ENUM}\xspace}
\newcommand\methodAX{MoCHy-A\xspace}
\newcommand\methodAEX{MoCHy-A\xspace}
\newcommand\methodAWX{MoCHy-A\textsuperscript{+}\xspace}

\newcommand\naive{na\"ive\xspace}
\newcommand\Naive{Na\"ive\xspace}

\newcommand\hwedge{hyperwedge\xspace}
\newcommand\hwedges{hyperwedges\xspace}
\newcommand\Hwedge{Hyperwedge\xspace}
\newcommand\Hwedges{Hyperwedges\xspace}

\newcommand\MT{M[t]\xspace}
\newcommand\MB{\bar{M}\xspace}
\newcommand\MBT{\bar{M}[t]\xspace}
\newcommand\MH{\hat{M}\xspace}
\newcommand\MHT{\hat{M}[t]\xspace}
\newcommand\MD{\tilde{M}\xspace}
\newcommand\MDT{\tilde{M}[t]\xspace}

\newcommand\mB{\bar{m}\xspace}
\newcommand\mBT{\bar{m}[t]\xspace}
\newcommand\mH{\hat{m}\xspace}
\newcommand\mHT{\hat{m}[t]\xspace}

\newcommand\Mi{M_{e_i}\xspace}
\newcommand\Mij{M_{\wedge_{ij}}\xspace}
\newcommand\Mijk{M_{\sqcap_{ijk}}\xspace}
\newcommand\MTi{M_{e_i}[t]\xspace}
\newcommand\MTij{M_{\wedge_{ij}}[t]\xspace}
\newcommand\MTijk{M_{\sqcap_{ijk}}[t]\xspace}

\newcommand\wij{\wedge_{ij}\xspace}
\newcommand\wik{\wedge_{ik}\xspace}
\newcommand\wjk{\wedge_{jk}\xspace}
\newcommand\wki{\wedge_{ki}\xspace}
\newcommand\eijk{\{e_i,e_j,e_k\}\xspace}
\newcommand\hijk{h(\{e_i,e_j,e_k\})\xspace}
\newcommand{\egeneral}{\{e_{s_1},e_{s_2},...,e_{s_k}\}\xspace}
\newcommand{\hgeneral}{h(\{e_{s_1},e_{s_2},...,e_{s_k}\})\xspace}

\newcommand\GT{\bar{G}}

\newcommand\NT{{N}}
\newcommand\nei{N_{e_{i}}}
\newcommand\nej{N_{e_{j}}}
\newcommand\PT{\wedge}

\newcommand\degt[1]{|\NT_{#1}|}
\newcommand{\smallsection}[1]{{\vspace{0.02in} \noindent {\bf{\underline{\smash{#1}}}}}}

\maketitle

\begin{abstract}
 Hypergraphs naturally represent group interactions, which are omnipresent in many domains: collaborations of researchers, co-purchases of items, joint interactions of proteins, to name a few.
In this work, we propose tools for answering the following questions in a systematic manner:
{\bf (Q1)} what are structural design principles of real-world hypergraphs?
{\bf (Q2)} how can we compare local structures of hypergraphs of different sizes?
{\bf (Q3)} how can we identify domains which hypergraphs are from?
We first define {\it hypergraph motifs} (\motifs), which describe the connectivity patterns of three connected hyperedges.
Then, we define the significance of each \motif in a hypergraph as its occurrences relative to those in properly randomized hypergraphs.
Lastly, we define the {\it characteristic profile} (CP) as the vector of the normalized significance of every \motif.
Regarding Q1, we find that \motifs' occurrences in $11$ real-world hypergraphs from $5$ domains are clearly distinguished from those of randomized hypergraphs. In addition, we demonstrate that CPs capture local structural patterns unique to each domain, and thus comparing CPs of hypergraphs addresses Q2 and Q3.
Our algorithmic contribution is to propose \method, a family of parallel algorithms for counting \motifs' occurrences in a hypergraph. 
We  theoretically analyze their speed and accuracy, and we show empirically that the advanced approximate version \methodAW is up to $25\times$ more accurate and $32\times$ faster than the basic approximate and exact versions, respectively.

\end{abstract}

\vspace{-0.5mm}
\section{Introduction}
\label{sec:intro}

Complex systems consisting of pairwise interactions between individuals or objects are naturally expressed in the form of graphs. Nodes and edges, which compose a graph, represent individuals (or objects) and their pairwise interactions, respectively.
Thanks to their powerful expressiveness, graphs have been used in a wide variety of fields, including social network analysis, web, bioinformatics, and epidemiology. Global structural patterns of real-world graphs, such as power-law degree distribution \cite{barabasi1999emergence,faloutsos1999power} and six degrees of separation \cite{kang2010radius,watts1998collective}, have been extensively investigated.

\begin{figure}[t]
	\centering 
	\includegraphics[width=1.0\linewidth]{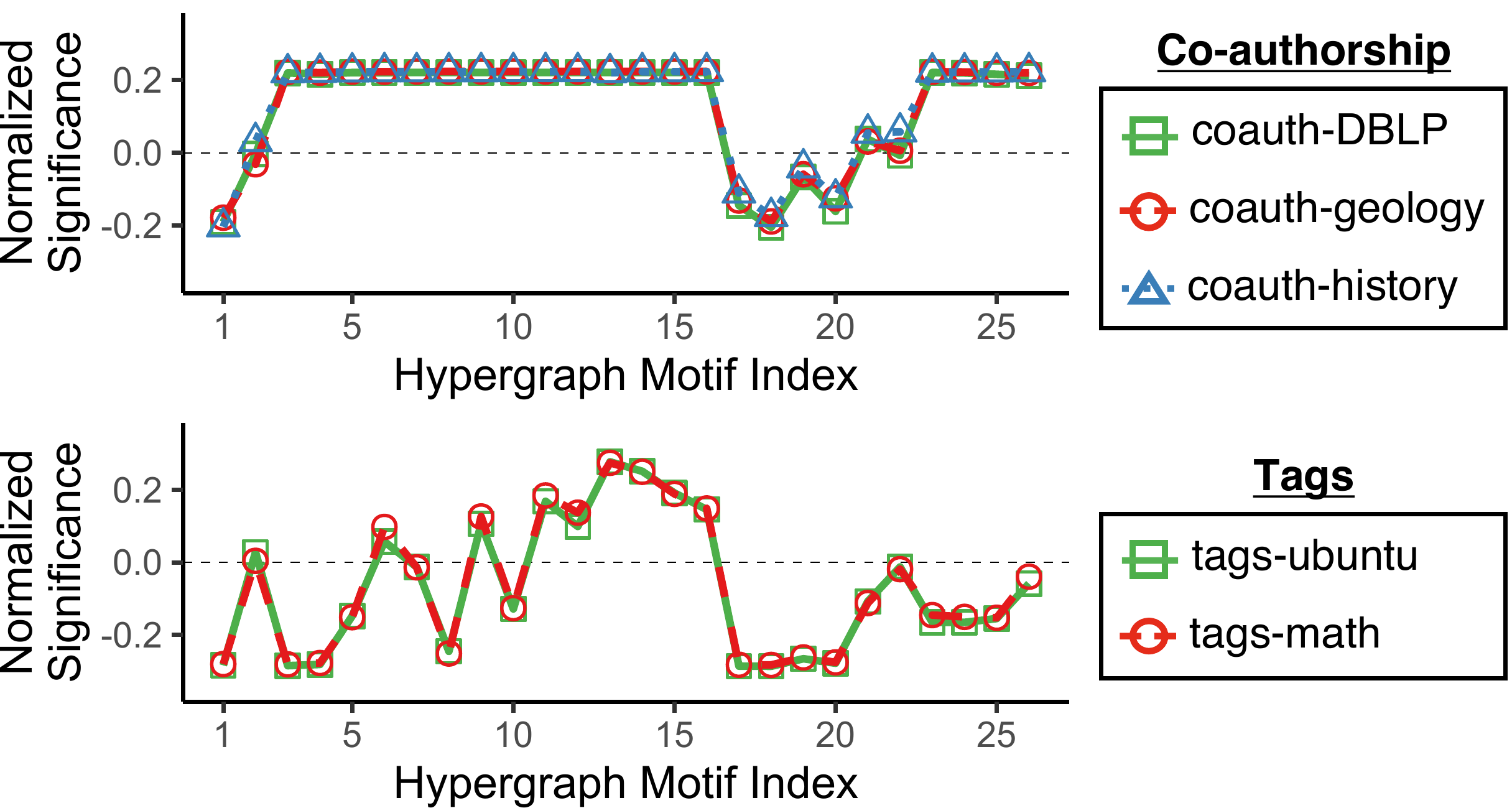}
	\caption{\label{fig:crown} 
	Distributions of \motifs' instances precisely characterize local structural patterns of real-world hypergraphs.
	Note that the hypergraphs from the same domains have similar distributions, while the hypergraphs from different domains do not.
	See Section~\ref{sec:exp:domain} for details.	
}
\end{figure}

\begin{figure*}[t]
	\centering
	\vspace{-5mm}
	\hspace{-2mm}
	\subfigure[Example data: coauthorship relations\label{fig:example:coauthorship}]{
		\includegraphics[width=0.5\columnwidth]{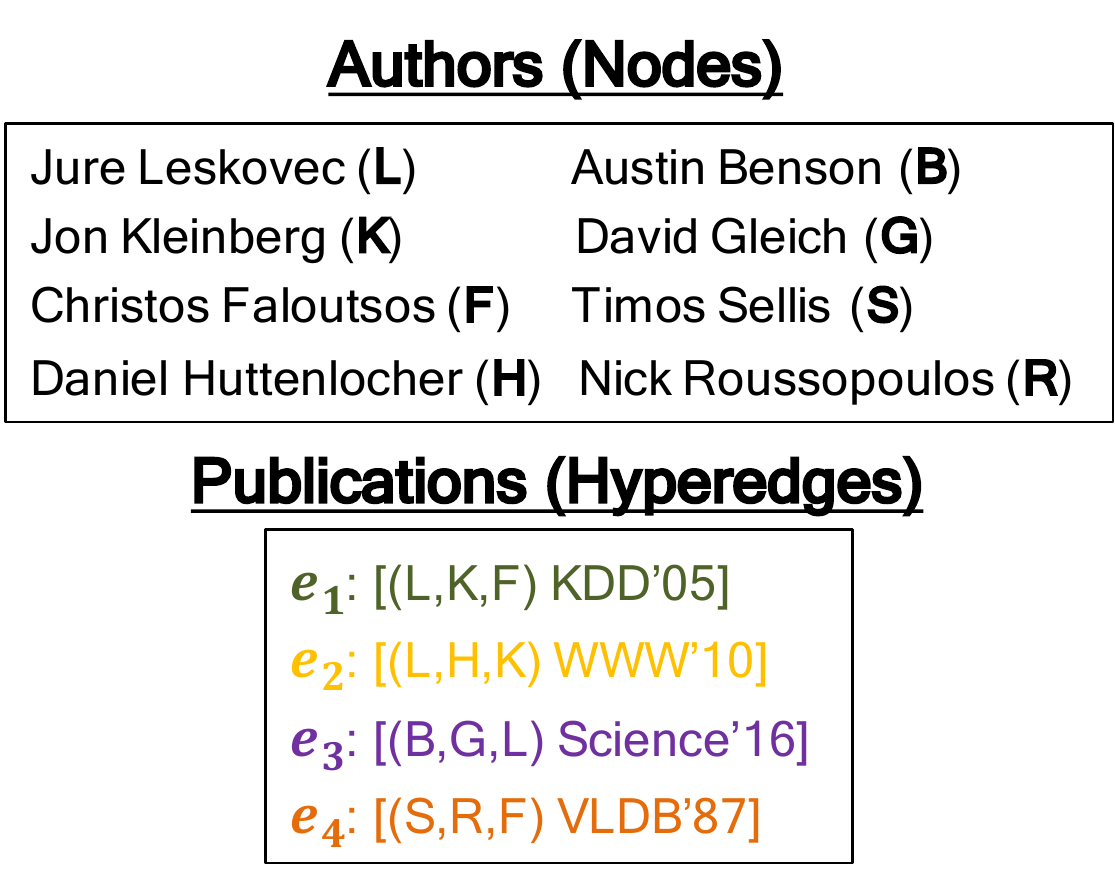}
	}
	\subfigure[Hypergraph representation\label{fig:example:hypergraph}]{
		\includegraphics[width=0.3245\columnwidth]{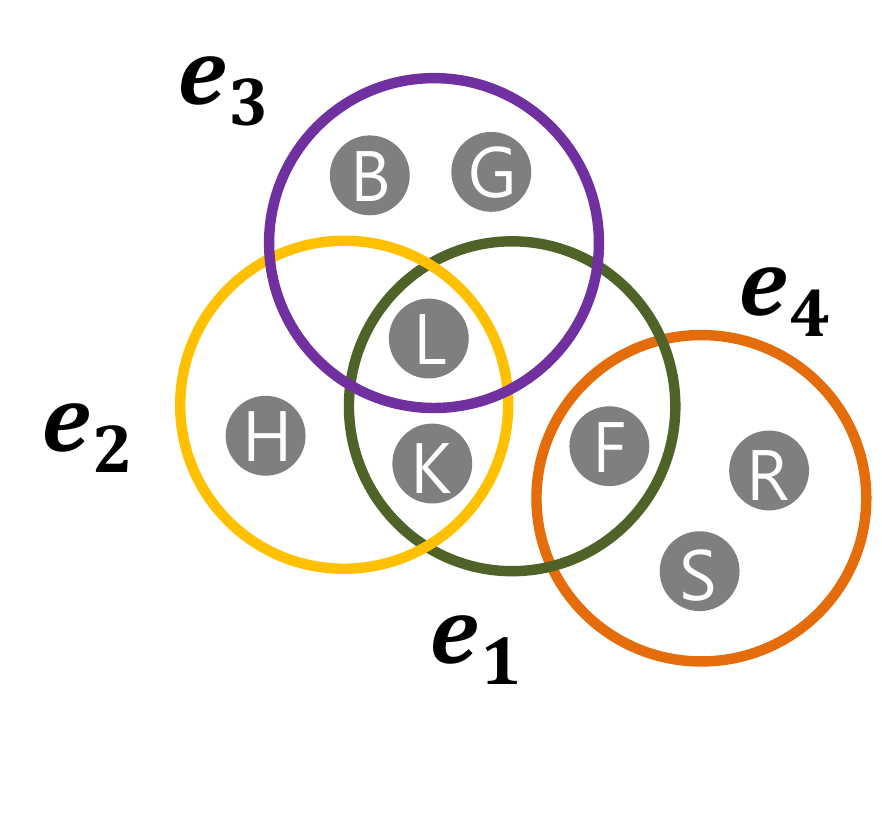}
	}
	\subfigure[Projected graph\label{fig:example:graph}]{
		\includegraphics[width=0.2135\columnwidth]{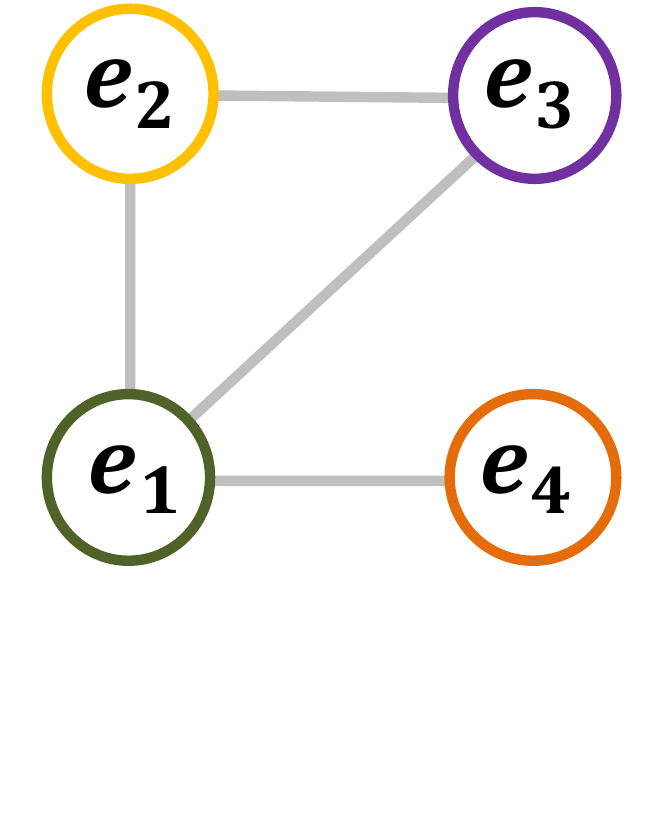} 
	}
	\subfigure[\Motifs and instances \label{fig:example:motif}]{
		\includegraphics[width=0.937\columnwidth]{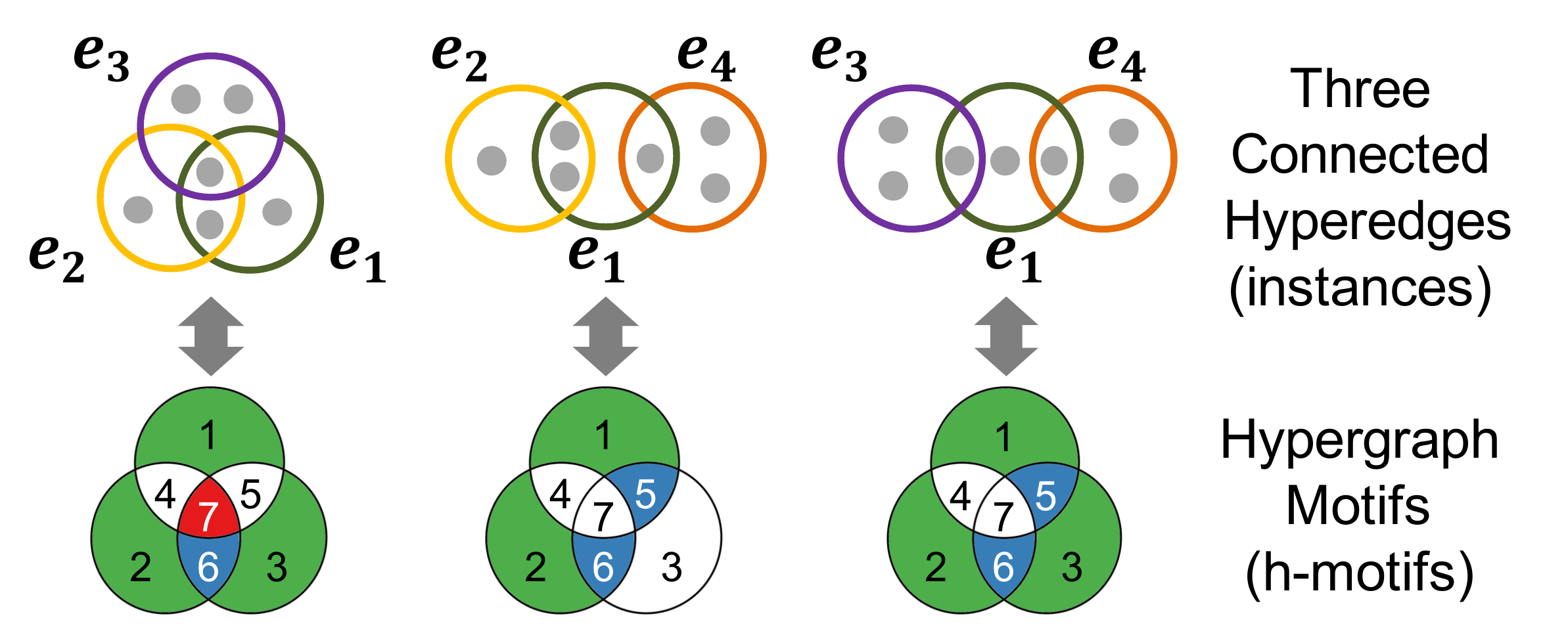}
	}
	\\
	\vspace{-2mm}
	\caption{
		\label{fig:example}
		(a) Example: co-authorship relations.
			(b) Hypergraph: the hypergraph representation of (a).
			(c) Projected Graph: the projected graph of (b).
			(d) Hypergraph Motifs: example \motifs and their instances in (b).}
\end{figure*}

In addition to global patterns, real-world graphs exhibit patterns in their local structures, which differentiate graphs in the same domain from random graphs or those in other domains. 
Local structures are revealed by counting the occurrences of different network motifs \cite{milo2004superfamilies,milo2002network}, which describe the patterns of pairwise interactions between a fixed number of connected nodes (typically $3$, $4$, or $5$ nodes).   
As a fundamental building block, network motifs have played a key role in many analytical and predictive tasks, including community detection \cite{benson2016higher,li2019edmot,tsourakakis2017scalable,yin2017local}, classification \cite{chen2013identification,lee2019graph,milo2004superfamilies}, and anomaly detection \cite{becchetti2010efficient,shin2020fast}.


Despite the prevalence of graphs, interactions in many complex systems are groupwise rather than pairwise: collaborations of researchers, co-purchases of items, joint interactions of proteins, tags attached to the same web post, to name a few. 
These group interactions cannot be represented by edges in a graph.
Suppose three or more researchers coauthor a publication. This co-authorship cannot be represented as a single edge, and creating edges between all pairs of the researchers cannot be distinguished from multiple papers coauthored by subsets of the researchers.

This inherent limitation of graphs is addressed by hypergraphs, which consist of nodes and hyperedges.
Each hyperedge is a subset of any number of nodes, and it represents a group interaction among the nodes.
For example, the coauthorship relations in Figure~\ref{fig:example:coauthorship} are naturally represented as the hypergraph in Figure~\ref{fig:example:hypergraph}.
In the hypergraph, seminar work \cite{leskovec2005graphs} coauthored by Jure Leskovec (L), Jon Kleinberg (K), and Christos Faloutsos (F) is expressed as the hyperedge $e_{1}=\{L,K,F\}$, and it is distinguished from three papers coauthored by each pair, which, if they exist, can be represented as three hyperedges $\{K,L\}$, $\{F,L\}$, and $\{F,K\}$.



\begin{table}[t!]
	\begin{center}
		\caption{\label{notations}Frequently-used symbols.}
		\scalebox{0.82}{
			\begin{tabular}{c|l}
				\toprule
				\textbf{Notation} & \textbf{Definition}\\
				\midrule
				$G=(V,E)$ & hypergraph with nodes $V$ and hyperedges $E$\\
				$E=\{e_1,...,e_{|E|}\}$ & set of hyperedges \\
				$E_v$ & set of hyperedges that contains a node $v$\\
				\midrule
				$\wedge$ & set of \hwedges in $G$\\
				$\wij$ & \hwedge consisting of $e_i$ and $e_j$ \\
				\midrule
				$\GT=(E,\wedge, \omega)$ & projected graph of $G$ \\
				$\omega(\wij)$ & the number of nodes shared between $e_i$ and $e_j$ \\
				$\nei$ & set of neighbors of $e_i$ in $\GT$ \\
				\midrule
				$h(\{e_i,e_j,e_k\})$ & \motif corresponding to an instance $\{e_i,e_j,e_k\}$ \\ 
				$M[t]$ & count of \motif $t$'s instances\\
				\bottomrule 
			\end{tabular}}
	\end{center}
	\vspace{-4mm}
\end{table}

The successful investigation and discovery of local structural patterns in real-world graphs motivate us to explore local structural patterns in real-world hypergraphs.
However, network motifs, which proved to be useful for graphs, are not trivially extended to hypergraphs.
Due to the flexibility in the size of hyperedges, there can be infinitely many patterns of interactions among a fixed number of nodes, and other nodes can also be associated with these interactions.

In this work, taking these challenges into consideration, we define $26$ {\it hypergraph motifs} (\motifs) so that they describe connectivity patterns of three connected hyperedges (rather than nodes).
As seen in Figure~\ref{fig:example:motif}, \motifs describe the connectivity pattern of hyperedges $e_{1}$, $e_{2}$, and $e_{3}$ by the emptiness of seven subsets: $e_{1}\setminus e_{2} \setminus e_{3}$, $e_{2}\setminus e_{3} \setminus e_{1}$,  $e_{3}\setminus e_{1} \setminus e_{2}$, $e_{1}\cap e_{2} \setminus e_{3}$, $e_{2}\cap e_{3} \setminus e_{1}$, $e_{3}\cap e_{1} \setminus e_{2}$, and $e_{1}\cap e_{2} \cap e_{3}$.
As a result, every connectivity pattern is described by a unique \motif, independently of the sizes of hyperedges. 
While this work focuses on connectivity patterns of three hyperedges, \motifs are easily extended to four or more hyperedges.


We count the number of each \motif's instances in $11$ real-world hypergraphs from $5$ different domains.
Then, we measure the significance of each \motif in each hypergraph by comparing the count of its instances in the hypergraph against the counts in properly randomized hypergraphs. Lastly, we compute the {\it characteristic profile} (CP) of each hypergraph, defined as the vector of the normalized significance of every \motif.
Comparing the counts and CPs of different hypergraphs leads to the following observations:
\begin{itemize}[leftmargin=*]
	\itemsep-0.2em 
	\item Structural design principles of real-world hypergraphs that are captured by frequencies of different \motifs are clearly distinguished from those of randomized hypergraphs.
	\item Hypergraphs from the same domains have similar CPs, while hypergraphs from different domains have distinct CPs (see Figure~\ref{fig:crown}). In other words, CPs successfully capture local structure patterns unique to each domain.
\end{itemize}

Our algorithmic contribution is to design \method (\textbf{Mo}tif \textbf{C}ounting in \textbf{Hy}pergraphs), a family of parallel algorithms for counting \motifs' instances, which is the computational bottleneck of the aforementioned process.
Note that since non-pairwise interactions are taken into consideration, counting the instances of \motifs is more challenging than counting the instances of network motifs, which are defined solely based on pairwise interactions.
We provide one exact version, named \methodE, and two approximate versions, named \methodAE and \methodAW.
Empirically, \methodAW is up to $25\times$ more accurate than \methodAE, and it is up to $32\times$ faster than \methodE, with little sacrifice of accuracy.
These empirical results are consistent with our theoretical analyses.

In summary, our contributions are summarized as follow:
\begin{itemize}[leftmargin=*]
	\itemsep-0.2em 
    \item {\bf Novel Concepts:} We propose \motifs, the counts of whose instances capture local structures of hypergraphs, independently of the sizes of hyperedges or hypergraphs. 
    \item {\bf Fast and Provable Algorithms:} We develop \method, a family of parallel algorithms for counting \motifs' instances. We show theoretically and empirically that the advanced version significantly outperforms the basic ones, providing a better trade-off between speed and accuracy.
    \item {\bf Discoveries in $11$ Real-world Hypergraphs:} We show that \motifs and CPs reveal local structural patterns that are shared by hypergraphs from the same domains but distinguished from those of random hypergraphs and hypergraphs from other domains (see Figure~\ref{fig:crown}).
\end{itemize}

\noindent{\bf Reproducibility:} The code and datasets used in this work are available at \url{https://github.com/geonlee0325/MoCHy}.

In Section~\ref{sec:concept}, we introduce \motifs and characteristic profiles. 
In Section~\ref{sec:method}, we present exact and approximate algorithms for counting instances of \motifs, and we analyze their theoretical properties. 
In Section~\ref{sec:exp}, we provide experimental results. 
After discussing related work in Section~\ref{sec:related}, we offer conclusions in Section~\ref{sec:summary}.

\vspace{-0.5mm}
\section{Proposed Concepts}
\label{sec:concept}

\begin{figure*}[t]
	\centering
	\vspace{-4mm}
	\includegraphics[width=1.01\linewidth]{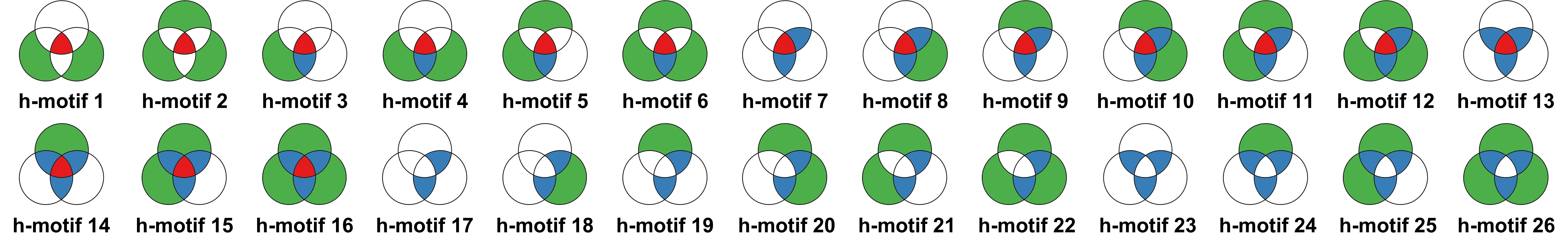} \\
	\vspace{-1mm}
	\caption{\label{motif_three_hyperedges} The 26 \motifs studied in this work. Note that
		\motifs 17 - 22 are open, while the others are closed.
	}
\end{figure*}


In this section, we introduce the proposed concepts: hypergraph motifs and characteristic profiles. Refer Table \ref{notations} for the notations frequently used throughout the paper.

\vspace{-0.5mm}
\subsection{Preliminaries and Notations}
\label{sec:concept:prelim}

We define some preliminary concepts and their notations.

\smallsection{Hypergraph} Consider a {\it hypergraph} $G=(V,E)$, where $V$ and $E:=\{e_1,e_2,...,e_{|E|}\}$ are sets of nodes and hyperedges, respectively. 
Each hyperedge $e_i\in E$ is a non-empty subset of $V$, and we use $|e_i|$ to denote the number of nodes in it.
For each node $v\in V$, we use $E_v:=\{e_i\in E: v\in e_i\}$ to denote the set of hyperedges that include $v$. 
We say two hyperedges $e_i$ and $e_j$ are {\it adjacent} if they share any member, i.e., if $e_i\cap e_j \neq \varnothing$.
Then, for each hyperedge $e_i$,
we denote the set of hyperedges adjacent to $e_i$ by $\nei:=\{e_j\in E: e_i\cap e_j \neq \varnothing\}$ and the number of such hyperedges by $|\nei|$.
Similarly, we say three hyperedges $e_i$, $e_j$, and $e_k$ are {\it connected} if one of them is adjacent to two the others. 

\smallsection{\Hwedges:}
We define a \textit{\hwedge} as an unordered pair of adjacent hyperedges. We denote the set of \hwedges in $G$ by $\wedge:=\{\{e_i,e_j\}\in {E \choose 2}: e_i\cap e_j \neq \varnothing\}$.
We use $\wij\in \wedge$ to denote the \hwedge consisting of $e_i$ and $e_j$. 
In the example hypergraph in Figure~\ref{fig:example:hypergraph},
there are four \hwedges: $\wedge_{12}$, $\wedge_{13}$, $\wedge_{23}$, and $\wedge_{14}$.

\begin{figure}[t!]
	\centering
	\includegraphics[width=0.75\columnwidth]{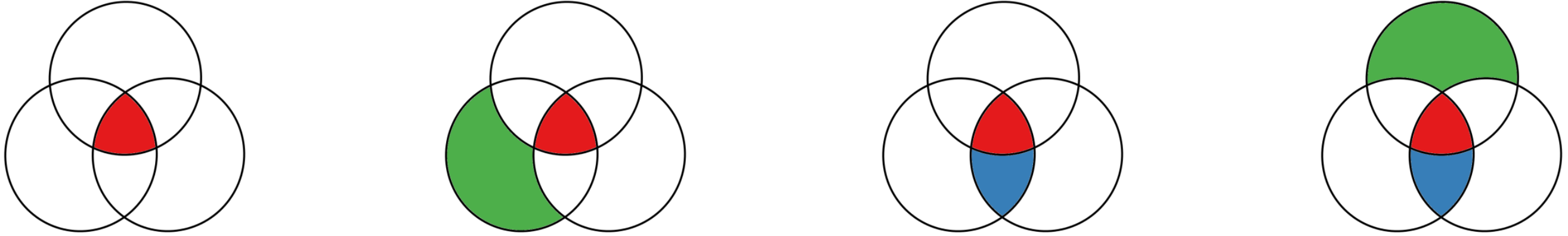} 
	\caption{\label{motif_three_hyperedges_duplicated}
		The \motifs whose instances contain duplicated hyperedges.
	}
\end{figure}



\smallsection{Projected Graph:}
We define the {\it projected graph} of $G=(V,E)$ as $\GT=(E,\wedge, \omega)$ where $\wedge$ is the set of hyperwedges and $\omega(\wij):=|e_i \cap e_j|$.
That is, in the projected graph $\GT$, hyperedges in $G$ act as nodes, and two of them are adjacent if and only if they share any member.
Note that for each hyperedge $e_i\in E$, $\nei$  is the set of neighbors of $e_i$ in $\GT$, and $|\nei|$ is its degree in $\GT$.
Figure~\ref{fig:example:graph} shows the projected graph of the example hypergraph in Figure~\ref{fig:example:hypergraph}.


\vspace{-0.5mm}
\subsection{Hypergraph Motifs}
\label{sec:concept:motif}
We introduce hypergraph motifs, which are basic building blocks of hypergraphs, with related concepts. Then, we discuss their properties and generalization.

\smallsection{Definition and Representation:}
Hypergraph motifs (or \motifs in short) are for describing the connectivity patterns of three connected hyperedges.
Specifically, given a set $\eijk$ of three connected hyperedges, \motifs describe its connectivity pattern by the emptiness of the following seven sets: (1) $e_i\setminus e_j \setminus e_k$, (2) $e_j\setminus e_k \setminus e_i$, (3) $e_k\setminus e_i \setminus e_j$, (4) $e_i\cap e_j \setminus e_k$, (5) $e_j\cap e_k \setminus e_i$, (6) $e_k\cap e_i \setminus e_j$, and (7) $e_i\cap e_j \cap e_k$.
Formally, a \motif is defined as a binary vector of size $7$ whose elements represent the emptiness of the above sets, respectively, and as seen in Figure~\ref{fig:example:motif}, \motifs are naturally represented in the Venn diagram.
While there can be $2^7$ \motifs, $26$ \motifs remain once we exclude symmetric ones, those with duplicated hyperedges (see Figure~\ref{motif_three_hyperedges_duplicated}), and those cannot be obtained from connected hyperedges.
The 26 cases, which we call {\it \motif 1} through {\it \motif 26}, are visualized in the Venn diagram in Figure~\ref{motif_three_hyperedges}.


\smallsection{Instances, Open \motifs, and Closed \motifs:}
Consider a hypergraph $G=(V,E)$.
A set of three connected hyperedges is an {\it instance} of \motif $t$ if their connectivity pattern corresponds to \motif $t$.
The count of each \motif's instances is used to characterize the local structure of $G$, as discussed in the following sections.
A \motif is {\it closed} if all three hyperedges in its instances are adjacent to (i.e., overlapped with) each other. If its instances contain two non-adjacent (i.e., disjoint) hyperedges, a \motif is {\it open}.
In Figure~\ref{motif_three_hyperedges}, \motifs $17$ - $22$ are open; the others are closed.

\smallsection{Properties of \motifs:}
From the definition of \motifs, the following desirable properties are immediate:
\vspace{-0.5mm}
\begin{itemize}[leftmargin=*]
\itemsep-0.2em 
\item {\bf Exhaustive:} \motifs capture connectivity patterns of \textit{all possible} three connected hyperedges.
\item {\bf Unique:} connectivity pattern of any three connected hyperedges is captured by \textit{at most one} \motif.
\item {\bf Size Independent:} \motifs capture connectivity patterns \textit{independently of the sizes of hyperedges}. Note that there can be infinitely many combinations of sizes of three connected hyperedges.
\end{itemize}
\vspace{-0.5mm}
Note that the exhaustiveness and the uniqueness imply that connectivity pattern of any three connected hyperedges is captured by \textit{exactly one} \motif.

\smallsection{Why Non-pairwise Relations?:}
Non-pairwise relations \\ (e.g., the emptiness of $e_{1}\cap e_{2} \cap e_{3}$ and $e_{1}\setminus e_{2} \setminus e_{3}$) play a key role in capturing the local structural patterns of real-world hypergraphs.
Taking only the pairwise relations (e.g., the emptiness of $e_{1}\cap e_{2}$,  $e_{1}\setminus e_{2}$, and $e_{2} \setminus  e_{1}$) into account limits the number of possible connectivity patterns of three distinct hyperedges to just eight,\footnote{\scriptsize \change{Note that using the conventional network motifs in projected graphs limits this number to two.}}  significantly limiting their expressiveness and thus usefulness.
Specifically, $12$ (out of $26$) \motifs have the same pairwise relations, while their occurrences and significances vary substantially in real-world hypergraphs. 
For example, in Figure~\ref{fig:example},  $\{e_{1}, e_{2}, e_{4}\}$  and $\{e_{1}, e_{3}, e_{4}\}$ have the same pairwise relations, while their connectivity patterns are distinguished by \motifs.

\smallsection{Generalization to More than $3$ Hyperedges:}
The concept of \motifs is easily generalized to four or more hyperedges. For example, a \motif for four hyperedges can be defined as a binary vector of size $15$ indicating the emptiness of each region in the Venn diagram for four sets.
After excluding disconnected ones, symmetric ones, and those with duplicated hyperedges, there remain $1,853$ and $18,656,322$ \motifs for four and five hyperedges, respectively, as discussed in detail in Appendix F of \cite{full}.
This work focuses on the \motifs for three hyperedges, which are already capable of characterizing local structures of real-world hypergraphs, as shown empirically in Section~\ref{sec:exp}.

\begin{figure*}[t]
	\vspace{-3mm}
\end{figure*}

\vspace{-0.5mm}
\subsection{Characteristic Profile (CP)}

What are the structural design principles of real-world hypergraphs distinguished from those of random hypergraphs?
Below, we introduce the characteristic profile (CP), which is a tool for answering the above question using \motifs.

\smallsection{Randomized Hypergraphs:}
While one might try to characterize the local structure of a hypergraph by absolute counts of each \motif's instances in it, some \motifs may naturally have many instances.
Thus, for more accurate characterization, we need random hypergraphs to be compared against real-world hypergraphs.
\change{We obtain such random hypergraphs by randomizing the compared real-world hypergraph. To this end, we represent the hypergraph $G=(V,E)$ as the bipartite graph $G'$ where $V$ and $E$ are the two subsets of nodes, and there exists an edge between $v\in V$ and $e \in E$ if and only if $v\in e$ That is, $G'=(V',E')$ where $V':=V\cup E$ and $E':=\{(v,e)\in V\times E: v\in e\}$.
Then, we use the Chung-Lu model to generate bipartite graphs where the degree distribution of $G'$ is well preserved \cite{aksoy2017measuring}. 
Reversely, from each of the generated bipartite graphs, we can obtain a randomized hypergraph where the degree (i.e., the number of hyperedges that each node belongs to) distribution of nodes and the size distribution of hyperedges in $G$ are well preserved. We provide the pseudocode and the distributions in the randomized hypergraphs in Appendix D of \cite{full}.}

\smallsection{Significance of \Motifs:}
We measure the significance of each \motif in a hypergraph by comparing the count of its instances against the count of them in randomized hypergraphs.
Specifically, the {\it significance} of a \motif $t$ in a hypergraph $G$ is defined as
	\vspace{-1mm}
\begin{equation}
    \Delta_t := \frac{M[t] - M_{rand}[t]}{M[t] + M_{rand}[t] + \epsilon}, \label{eq:significance}
    	\vspace{-1mm}
\end{equation}
where $M[t]$ is the number of instances of \motif $t$ in $G$, and $M_{rand}[t]$ is the average number of instances of \motif $t$ in randomized hypergraphs. We fixed $\epsilon$ to $1$ throughout this paper.
This way of measuring significance was proposed for network motifs as an alternative of normalized Z scores, which heavily depend on the graph size  \cite{milo2004superfamilies}.


\smallsection{Characteristic Profile (CP):} By normalizing and concatenating the significances of all \motifs in a hypergraph, we obtain the characteristic profile (CP), which summarizes the local structural pattern of the hypergraph.
Specifically, the {\it characteristic profile} of a hypergraph $G$ is a vector of size $26$, where each $t$-th element is 
	\vspace{-1mm}
\begin{equation}
    CP_t := \frac{\Delta_t}{\sqrt{\sum_{t=1}^{26} \Delta_t^2}}. \label{eq:cp}
    	\vspace{-1mm}
\end{equation}
Note that, for each $t$, $CP_t$ is between $-1$ and $1$.
The CP is used in Section~\ref{sec:exp:domain} to compare the local structural patterns of real-world hypergraphs from diverse domains.



\vspace{-0.5mm}
\section{Proposed Algorithms}
\label{sec:method}


Given a hypergraph, how can we count the instances of each \motif? 
Once we count them in the original and randomized hypergraphs, the significance of each \motif and the CP are obtained immediately by Eq.~\eqref{eq:significance} and Eq.~\eqref{eq:cp}.

In this section, we present \method (\textbf{Mo}tif \textbf{C}ounting in \textbf{Hy}pergraphs), which is a family of parallel algorithms for counting the instances of each \motif in the input hypergraph.
We first describe hypergraph projection, which is a preprocessing step of every version of \method.
Then, we present \methodE, which is for exact counting.
After that, we present two different versions of  \methodA, which are sampling-based algorithms for approximate counting.
Lastly, we discuss parallel and on-the-fly implementations.

Throughout this section, we use $\hijk$ to denote the \motif that describes the connectivity pattern of an \motif instance $\eijk$. We also use $\MT$ to denote the count of instances of \motif $t$.

\begin{algorithm}[t]
	\setstretch{1.25}
	\small
	\caption{\small Hypergraph Projection (Preprocess)}
	\label{hyperwedge_counting}
	\SetAlgoLined
	\SetKwInOut{Input}{Input}
	\SetKwInOut{Output}{Output}
	\nonl \hspace{-4mm} \Input{input hypergraph: $G=(V,E)$}
	\nonl \hspace{-4mm} \Output{projected graph: $\GT=(E,\PT, \omega)$}
	$\PT\leftarrow \varnothing$ \\
	$\omega \leftarrow$ map whose default value is $0$ \\
	\For{\textbf{each} hyperedge $e_i \in E$ \label{hyperwedge_counting:loop1}}{  
		\For{\textbf{each} node $v \in e_i$ \label{hyperwedge_counting:loop2}}{
			\For{\textbf{each} hyperedge $e_j \in E_v$ where $j>i$ \label{hyperwedge_counting:loop3}}{
				$\PT\leftarrow \PT \cup \{\wij\}$ \\ \label{hyperwedge_counting:body1}
				$\omega(\wij)=\omega(\wij)+ 1$
				\label{hyperwedge_counting:body2}
			}
		}
	}
	return $\GT=(E,\PT, \omega)$ 
\end{algorithm}

\smallsection{Remarks:}
	The problem of counting \motifs' occurrences bears some similarity to the classic problem of counting network motifs' occurrences.
	However, different from network motifs, which are defined solely based on pairwise interactions, \motifs are defined based on triple-wise interactions (e.g., $e_i\cap e_j \cap e_k$). 
	One might hypothesize that our problem can easily be reduced to the problem of counting the occurrences of network motifs, and thus existing solutions (e.g., \cite{bressan2019motivo,pinar2017escape}) are applicable to our problem.	
	In order to examine this possibility, we consider the following two attempts:
	\vspace{-0.5mm}
	\begin{enumerate}[leftmargin=*]
		\itemsep-0.2em 
		{\setlength\itemindent{5pt}\item[(a)] Represent pairwise relations between hyperedges using the projected graph, where each edge $\{e_i,e_j\}$ indicates $e_i\cap e_j\neq \emptyset$. }
		{\setlength\itemindent{5pt}\item[(b)] Represent pairwise relations between hyperedges using the directed projected graph where each directed edge $e_i \rightarrow e_j$ indicates 
		$e_i\cap e_j\neq \emptyset$ and at the same time $e_i \not\subset e_j$.}
	\end{enumerate}
	\vspace{-0.5mm}
	The number of possible connectivity patterns (i.e., network motifs) among three distinct connected hyperedges is just two (i.e., closed and open triangles) and eight in (a) and (b), respectively.
	In both cases,
	instances of multiple \motifs are not distinguished by network motifs, and 
	the occurrences of \motifs can not be inferred from those of network motifs.

In addition, another computational challenge stems from the fact that triple-wise and even pair-wise relations between hyperedges need to be computed from the input hypergraph, while pairwise relations between edges are given in graphs. This challenge necessitates the precomputation of partial relations, described in the next subsection.
	


\begin{figure*}[t]
	\vspace{-3mm}
\end{figure*}

\vspace{-0.5mm}
\subsection{Hypergraph Projection (Algorithm~\ref{hyperwedge_counting})}
\label{sec:method:projection}
As a preprocessing step, every version of \method builds the projected graph $\GT=(E,\PT, \omega)$ (see Section~\ref{sec:concept:prelim}) of the input hypergraph $G=(V,E)$, as described in Algorithm~\ref{hyperwedge_counting}. 
To find the neighbors of each hyperedge $e_i$ (line~\ref{hyperwedge_counting:loop1}), the algorithm visits each hyperedge $e_j$ that contains $v$ and satisfies $j>i$ (line~\ref{hyperwedge_counting:loop3}) for each node $v\in e_i$ (line~\ref{hyperwedge_counting:loop2}).
Then for each such $e_j$, it adds $\wij=\{e_i,e_j\}$ to $\PT$ and increments $\omega(\wij)$ (lines~\ref{hyperwedge_counting:body1} and \ref{hyperwedge_counting:body2}).
The time complexity of this preprocessing step is given in Lemma~\ref{lemma:projection:time}.
\begin{lemma}[Complexity of Hypergraph Projection\label{lemma:projection:time}]
	The time complexity of Algorithm~\ref{hyperwedge_counting} is $O(\sum_{\wij\in\PT}|e_i\cap e_j|)$. \vspace{-2mm}
\end{lemma}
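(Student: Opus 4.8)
The plan is to bound the total work done across all iterations of the nested loops in Algorithm~\ref{hyperwedge_counting} by accounting for how many times the innermost body (lines~\ref{hyperwedge_counting:body1} and \ref{hyperwedge_counting:body2}) executes. First I would observe that the body is entered once for each triple $(e_i, v, e_j)$ such that $v \in e_i$, $e_j \in E_v$, and $j > i$. For a fixed unordered pair $\{e_i, e_j\}$ with $e_i \cap e_j \neq \varnothing$ (i.e., a \hwedge $\wij \in \PT$), the number of such triples is exactly the number of common nodes $v \in e_i \cap e_j$ that trigger the iteration, which is $|e_i \cap e_j|$. The constraint $j > i$ ensures each unordered pair is counted in one orientation only, so there is no double counting.

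Next I would sum this contribution over all hyperwedges. Since each hyperwedge $\wij$ contributes $|e_i \cap e_j|$ executions of the constant-work body, the total number of body executions is $\sum_{\wij \in \PT} |e_i \cap e_j|$. Each such execution performs $O(1)$ work: a single insertion into the set $\PT$ (assuming amortized constant-time set operations) and a single map update for $\omega(\wij)$. Thus the work attributable to the innermost loop is $O(\sum_{\wij \in \PT} |e_i \cap e_j|)$.

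Then I would argue that the loop-control overhead of the outer and middle loops is dominated by this quantity. The middle loop over $v \in e_i$ and the outer loop over $e_i \in E$ together iterate $\sum_{e_i \in E} |e_i|$ times just to set up the innermost loop; I would need to check that this setup cost does not exceed the claimed bound. The potential subtlety here is that a node $v \in e_i$ might belong to no other hyperedge with larger index, so the middle loop could iterate without ever entering the body, contributing overhead not obviously charged to any hyperwedge. The natural resolution is that each hyperedge $e_i$ is nonempty and (under the standing assumption that isolated or degenerate cases are excluded, or by the convention that every hyperedge participates in at least one hyperwedge) the setup cost is absorbable into the dominant term; alternatively one treats $\sum_{e_i}|e_i|$ as bounded by the hyperwedge sum whenever the graph is connected enough.

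I expect this last point to be the main obstacle: making the bookkeeping precise so that the cost of merely scanning nodes that produce no hyperwedge is either genuinely dominated or explicitly folded into the stated complexity. I would handle it by noting that in the intended regime every node lies in at least two hyperedges (otherwise it generates no adjacency and can be ignored for projection), so each visited $(e_i, v)$ pair with $v$ in multiple hyperedges does contribute to at least one $|e_i \cap e_j|$ term, letting me charge the scan cost to the hyperwedge sum and conclude the overall time is $O(\sum_{\wij \in \PT} |e_i \cap e_j|)$.
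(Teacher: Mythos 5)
Your proof is correct and takes essentially the same approach as the paper: the paper's entire argument is that, with hash-table implementations, the innermost body takes $O(1)$ time and is executed exactly $|e_i\cap e_j|$ times for each $\wij\in\PT$, which is precisely your first two paragraphs. The loop-control overhead you spend your last two paragraphs on is simply not addressed by the paper---its proof counts only body executions---so your charging argument for scan cost (degree-one nodes aside) is extra care beyond what the paper itself provides, not a departure from its method.
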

\begin{proof}
If all sets and maps are implemented using hash tables, lines~\ref{hyperwedge_counting:body1} and \ref{hyperwedge_counting:body2} take $O(1)$ time, and they are executed $|e_i\cap e_j|$ times for each $\wij\in \PT$. \vspace{-2mm}
\end{proof}
\noindent Since $|\PT|<\sum_{e_i\in E}\degt{e_i}$ and $|e_i\cap e_j|\leq|e_i|$, Eq.~\eqref{eq:projection:time} holds.
\begin{equation}
\sum\nolimits_{\wij\in\PT}|e_i\cap e_j| < \sum\nolimits_{e_i\in E}(|e_i|\cdot|\nei|). \label{eq:projection:time} 
	\vspace{-1mm}
\end{equation}


\begin{algorithm}[t]
	\setstretch{1.25}
	\small
	\caption{\small \methodE: Exact \Motif Counting \label{exact_motif_counting}}
	\SetAlgoLined
	\SetKwInOut{Input}{Input}
	\SetKwInOut{Output}{Output}
	\nonl \hspace{-4mm} \Input{ \ \ (1) input hypergraph: $G=(V,E)$ \\ (2) projected graph: $\GT=(E,\PT, \omega)$}
	\nonl \hspace{-4mm} \Output{exact count of each \motif $t$'s instances: $\MT$}
	$M \leftarrow$ map whose default value is $0$ \\
	\For{\textbf{each} hyperedge $e_i \in E$ \label{exact_motif:loop1}}{
		\For{\textbf{each} unordered hyperedge pair $\{e_j, e_k\} \in$ $\nei \choose 2$\label{exact_motif:loop2}}{
			\If{$e_j \cap e_k = \varnothing$ or $i<\min(j,k)$\label{exact_motif:condition}}{
				$M[\hijk] \mathrel{+}= 1$ \label{exact_motif:count}
			}
		}
	}
	\textbf{return} $M$ 
\end{algorithm}

\vspace{-0.5mm}
\subsection{Exact \Motif Counting (Algorithm~\ref{exact_motif_counting})}
\label{sec:method:exact}

We present \methodE (\method \textbf{E}xact), which counts the instances of each \motif exactly. 
The procedures of \methodE are described in Algorithm~\ref{exact_motif_counting}.
For each hyperedge $e_i\in E$ (line~\ref{exact_motif:loop1}), each unordered pair $\{e_j,e_k\}$ of its neighbors, where $\eijk$ is an \motif instance, is considered (line~\ref{exact_motif:loop2}).
If $e_j \cap e_k = \varnothing$ (i.e., if the corresponding \motif is open), $\eijk$ is considered only once.
However, if $e_j \cap e_k \neq \varnothing$ (i.e., if the corresponding \motif is closed),
$\eijk$ is considered two more times (i.e., when $e_j$ is chosen in line~\ref{exact_motif:loop1} and when $e_k$ is chosen in line~\ref{exact_motif:loop1}).
Based on these observations, given an \motif instance $\eijk$, the corresponding count $M[\hijk]$ is incremented (line~\ref{exact_motif:count}) only if $e_j \cap e_k = \varnothing$ or $i < \min(j,k)$ (line~\ref{exact_motif:condition}). This guarantees that each instance is counted exactly once. 
The time complexity of \methodE is given in Theorem~\ref{thm:exact:time}, which uses Lemma~\ref{lemma:motif:time}.

\begin{lemma}[Time Complexity of Computing $\hijk$]\label{lemma:motif:time}
	Given the input hypergraph $G=(V,E)$ and its projected graph $\GT=(E,\wedge,\omega)$, for each \motif instance $\eijk$, computing $\hijk$ takes $O(\min(|e_i|,|e_j|,|e_k|))$ time. \vspace{-2mm}
\end{lemma}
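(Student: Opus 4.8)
The plan is to establish the bound by analyzing what information is needed to determine the seven-region emptiness pattern that defines $\hijk$. Recall that the \motif of an instance $\eijk$ is a binary vector indicating the emptiness of the seven Venn-diagram regions: the three ``private'' parts $e_i\setminus e_j\setminus e_k$ etc., the three ``pairwise-only'' intersections $e_i\cap e_j\setminus e_k$ etc., and the triple intersection $e_i\cap e_j\cap e_k$. The key observation is that the projected graph already supplies the pairwise overlap sizes $\omega(\wij)=|e_i\cap e_j|$ in $O(1)$ time, so the only nontrivial quantity left to compute is the size of the triple intersection $|e_i\cap e_j\cap e_k|$; once all three pairwise sizes and this triple size are known, the emptiness of all seven regions follows by inclusion-exclusion arithmetic in constant time.

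First I would show that the entire seven-bit vector is a function of the four numbers $|e_i\cap e_j|$, $|e_j\cap e_k|$, $|e_k\cap e_i|$, $|e_i\cap e_j\cap e_k|$ together with $|e_i|$, $|e_j|$, $|e_k|$. For instance, $|e_i\cap e_j\setminus e_k| = |e_i\cap e_j| - |e_i\cap e_j\cap e_k|$, and $|e_i\setminus e_j\setminus e_k| = |e_i| - |e_i\cap e_j| - |e_k\cap e_i| + |e_i\cap e_j\cap e_k|$, with symmetric formulas for the remaining regions; each region is empty iff the corresponding count is zero. All of these are $O(1)$ additions given the pairwise sizes (from $\omega$) and the hyperedge cardinalities (precomputed or stored), so the emptiness vector is assembled in constant time after $|e_i\cap e_j\cap e_k|$ is known.

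Next I would bound the cost of computing $|e_i\cap e_j\cap e_k|$, which is the heart of the argument. The natural method is to iterate over the members of the smallest of the three hyperedges and test membership in the other two. Assuming the hyperedges are stored as hash sets so that membership queries take $O(1)$ time, iterating over the smallest set and performing two lookups per element costs $O(\min(|e_i|,|e_j|,|e_k|))$ time. This dominates the $O(1)$ overhead from the previous step, giving the claimed overall bound. I would also note that since the instance is connected, at least one pair overlaps, so the triple intersection is well-defined to compute even when one region is empty.

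The main obstacle is not conceptual difficulty but making the accounting airtight: I must be explicit that the pairwise sizes come for free from the already-built projected graph $\GT$ (so they are not recomputed inside the per-instance cost), and I must justify the $O(1)$ membership-test assumption by pointing to the hash-set representation of hyperedges. The subtle point is ensuring one scans the \emph{smallest} hyperedge rather than an arbitrary one, which is what turns a naive $O(\min(|e_i|,|e_j|) )$-style bound into the tight $O(\min(|e_i|,|e_j|,|e_k|))$ stated in the lemma; establishing that the minimum can be identified in $O(1)$ from the stored cardinalities closes this gap.
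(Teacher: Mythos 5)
Your proposal is correct and follows essentially the same route as the paper's proof: read the three pairwise intersection sizes from $\omega$ in the projected graph, compute $|e_i\cap e_j\cap e_k|$ by scanning the smallest hyperedge with $O(1)$ hash-set membership tests, and recover the emptiness of all seven regions from these cardinalities by inclusion--exclusion in constant time. The only cosmetic difference is that the paper assumes $|e_i|=\min(|e_i|,|e_j|,|e_k|)$ without loss of generality, whereas you make explicit the $O(1)$ selection of the smallest hyperedge; the substance is identical.
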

\begin{proof}
	Assume $|e_i|=\min(|e_i|,|e_j|,|e_k|)$, without loss of generality, and all sets and maps are implemented using hash tables.
	As defined in Section~\ref{sec:concept:motif}, $\hijk$ is computed in $O(1)$ time from the emptiness of the following sets:
	(1) $e_i\setminus e_j \setminus e_k$, (2) $e_j\setminus e_k \setminus e_i$, (3) $e_k\setminus e_i \setminus e_j$, (4) $e_i\cap e_j \setminus e_k$, (5) $e_j\cap e_k \setminus e_i$, (6) $e_k\cap e_i \setminus e_j$, and (7) $e_i\cap e_j \cap e_k$.
	We check their emptiness from their cardinalities.
	We obtain $e_i$, $e_j$, and $e_k$, which are stored in $G$, and their cardinalities in $O(1)$ time.
	Similarly, we obtain $|e_i \cap e_j|$, $|e_j \cap e_k|$, and $|e_k \cap e_i|$, which are stored in $\GT$, in $O(1)$ time.
	Then, we compute $|e_i \cap e_j \cap e_k|$ in $O(|e_i|)$ time by checking for each node in $e_i$ whether it is also in both $e_j$ and $e_k$.
	From these cardinalities, we obtain the cardinalities of the six other sets in $O(1)$ time as follows: 
	\begin{align*}
		& (1) \ |e_i\setminus e_j \setminus e_k|  = |e_i|-|e_i\cap e_j|-|e_k\cap e_i|+|e_i \cap e_j \cap e_k|, \\
		& (2) \ |e_j\setminus e_k \setminus e_i| = |e_j|-|e_i\cap e_j|-|e_j\cap e_k|+|e_i \cap e_j \cap e_k|, \\
		& (3) \ |e_k\setminus e_i \setminus e_j| = |e_k|-|e_k\cap e_i|-|e_j\cap e_k|+|e_i \cap e_j \cap e_k|, \\
		& (4) \ |e_i\cap e_j \setminus e_k| = |e_i\cap e_j| - |e_i \cap e_j \cap e_k|, \\
		& (5) \ |e_j\cap e_k \setminus e_i| = |e_j\cap e_k| - |e_i \cap e_j \cap e_k|, \\
		&(6) \ |e_k\cap e_i \setminus e_j| = |e_k\cap e_i| - |e_i \cap e_j \cap e_k|.
	\end{align*}
	Hence, the time complexity of computing $\hijk$ is $O(|e_i|)=O(\min(|e_i|,|e_j|,|e_k|))$. \vspace{-2mm}
\end{proof}

\begin{thm}[Complexity of \methodE]\label{thm:exact:time}
	The time complexity of Algorithm~\ref{exact_motif_counting} is $O(\sum_{e_i \in E}(|\nei|^2 \cdot |e_i|))$. \vspace{-2mm}
\end{thm}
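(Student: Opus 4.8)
The plan is to account for the cost of Algorithm~\ref{exact_motif_counting} loop by loop, and then combine the iteration counts with the per-iteration cost supplied by Lemma~\ref{lemma:motif:time}. First I would observe that the outer loop (line~\ref{exact_motif:loop1}) runs once for each $e_i \in E$, and that for a fixed $e_i$ the inner loop (line~\ref{exact_motif:loop2}) ranges over every unordered pair $\{e_j,e_k\}$ drawn from $\nei$, of which there are exactly $\binom{|\nei|}{2} = O(|\nei|^2)$. Hence the algorithm performs $\sum_{e_i \in E} O(|\nei|^2)$ executions of the loop body in total.

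Next I would bound the work done in a single execution of the body. The condition on line~\ref{exact_motif:condition} can be evaluated in $O(1)$ time: the test $i < \min(j,k)$ is a constant-time index comparison, and the test $e_j \cap e_k = \varnothing$ amounts to asking whether $e_j$ and $e_k$ are adjacent in the projected graph, which with hash-table storage of $\PT$ (equivalently, a lookup of $\omega$) is an $O(1)$ membership query. When the condition holds, the body computes $\hijk$ and increments a counter; by Lemma~\ref{lemma:motif:time} the former takes $O(\min(|e_i|,|e_j|,|e_k|))$ time and the latter $O(1)$ with a hash map. Applying the crude bound $\min(|e_i|,|e_j|,|e_k|) \le |e_i|$, each execution of the body (for a fixed $e_i$) costs $O(|e_i|)$.

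Multiplying the iteration count by the per-iteration cost and summing over the outer index then gives
\[
\sum_{e_i \in E} O(|\nei|^2) \cdot O(|e_i|) = O\!\left(\sum_{e_i \in E} |\nei|^2 \cdot |e_i|\right),
\]
which is the claimed complexity.

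The main obstacle I anticipate is not the arithmetic but justifying that the per-iteration overhead really is dominated by the call to compute $\hijk$, i.e. confirming that the emptiness test on line~\ref{exact_motif:condition} is genuinely $O(1)$ rather than requiring an explicit intersection of $e_j$ and $e_k$. This hinges on the projected graph being available as a precomputed input, so that ``is $e_j \cap e_k = \varnothing$?'' reduces to a membership query on $\PT$; I would make this dependence explicit, since it is precisely what the preprocessing of Algorithm~\ref{hyperwedge_counting} buys us. The only other point worth flagging is that replacing $\min(|e_i|,|e_j|,|e_k|)$ by $|e_i|$ is a (possibly loose) upper bound, adopted so that the $|e_i|$ factor can be pulled outside the inner loop and attached to the outer index; a tighter analysis is possible but unnecessary for the stated asymptotic bound.
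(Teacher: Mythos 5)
Your proof is correct and takes essentially the same approach as the paper's: it bounds the number of triples considered in line~\ref{exact_motif:loop2} by $O(\sum_{e_i \in E}|\nei|^2)$ and charges each triple $O(|e_i|)$ for computing $\hijk$ via Lemma~\ref{lemma:motif:time}. The extra details you supply (the $O(1)$ adjacency test against the precomputed projected graph, the $O(1)$ hash-map increment) are subsumed in the paper's blanket assumption that all sets and maps are hash tables, so the two arguments coincide.
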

\begin{proof}
 Assume all sets and maps are implemented using hash tables.
 The total number of triples $\eijk$ considered in line~\ref{exact_motif:loop2} is $O(\sum_{e_i \in E}|\nei|^2)$.
 By Lemma~\ref{lemma:motif:time}, for such a triple $\eijk$, computing $\hijk$ takes $O(|e_i|)$ time.
 Thus, the total time complexity of Algorithm~\ref{exact_motif_counting} is $O(\sum_{e_i \in E}(|e_i|\cdot|\nei|^2))$, which dominates that of the preprocessing step (see Lemma~\ref{lemma:projection:time} and Eq.~\eqref{eq:projection:time}). \vspace{-2.5mm}
\end{proof}	
\smallsection{Extension of \methodE to \Motif Enumeration:} \\ \change{Since \methodE visits all \motif instances to count them, it is extended to the problem of enumerating every \motif instance (with its corresponding \motif), as described in Algorithm~\ref{alg:enum}.
The time complexity remains the same.}

\begin{algorithm}[t]
	\setstretch{1.25}
	\small
	\caption{\label{alg:enum}\small \methodEN for \Motif Enumeration}
	\SetAlgoLined
	\SetKwInOut{Input}{Input}
	\SetKwInOut{Output}{Output}
		\nonl \hspace{-4mm} \Input{ \ \ (1) input hypergraph: $G=(V,E)$ \\ (2) projected graph: $\GT=(E,\PT, \omega)$}
	\nonl \hspace{-4mm} \Output{\motif instances and their corresponding \motifs}
	\For{\textbf{each} hyperedge $e_i \in E$ \label{enum_alg:loop1}}{
		\For{\textbf{each} unordered hyperedge pair $\{e_j, e_k\} \in$ $\nei \choose 2$\label{enum_alg:loop2}}{
			\If{$e_j \cap e_k = \varnothing$ or $i<\min(j,k)$\label{enum_alg:condition}}{
				write($e_i$, $e_j$, $e_k$, $\hijk$)\label{enum_alg:write}
			}
		}
	}
\end{algorithm}

\IncMargin{0.5em}
\begin{algorithm}[t]	
	\setstretch{1.25}
	\small
	\caption{\small \methodAE: Approximate \Motif Counting Based on Hyperedge Sampling}
	\label{sampling_ver1}
	\SetAlgoLined
	\SetKwInOut{Input}{Input}
	\SetKwInOut{Output}{Output}
	\nonl \hspace{-5mm} \Input{ \ \ (1) input hypergraph: $G=(V,E)$ \\ (2) projected graph: $\GT=(E,\PT, \omega)$ \\  (3) number of samples: $s$}
	\nonl \hspace{-5mm} \Output{estimated count of each \motif $t$'s instances: $\MBT$}
	$\MBT\leftarrow$ map whose default value is $0$\\ 
	\For{$n\leftarrow 1...s$}{
		$e_i\leftarrow$ sample a uniformly random hyperedge \label{sampling_ver1:sample} \\
		\For{\textbf{each} hyperedge $e_j \in N_{e_i}$}{\label{sampling_ver1:loop1:start}
			\For{\textbf{each} hyperedge $e_k \in (N_{e_i} \cup N_{e_j} \setminus \{e_i,e_j\})$}{
				\If{$e_k \not\in N_{e_i}$ or $j<k$\label{sampling_ver1:condition}}{
					$\MB[\hijk] \mathrel{+}= 1$ 
					\label{sampling_ver1:count}
				}
			}
		}
	}
	\label{sampling_ver1:loop1:end}
	\For{\textbf{each} \motif $t$}{\label{sampling_ver1:scale:start}
		$\MBT \leftarrow \MBT \cdot \tfrac{|E|}{3s}$
	}
	\label{sampling_ver1:scale:end}
	\textbf{return} $\MB$ 
\end{algorithm}
\DecMargin{0.5em}

\vspace{-0.5mm}
\subsection{Approximate \Motif Counting}
\label{sec:method:approx}

We present two different versions of \methodA (\method \textbf{A}pproximate), which approximately count the instances of each \motif.
Both versions yield unbiased estimates of the counts by exploring the input hypergraph partially through hyperedge and \hwedge sampling, respectively.



\begin{figure*}[t]
	\vspace{-3mm}
\end{figure*}

\smallsection{\methodAE: Hyperedge Sampling (Algorithm~\ref{sampling_ver1}):}\label{sec:method:approx:ver1} 

\noindent \methodAE (Algorithm~\ref{sampling_ver1}) is based on hyperedge sampling. 
It repeatedly samples $s$ hyperedges from the hyperedge set $E$ uniformly at random with replacement (line~\ref{sampling_ver1:sample}). 
For each sampled hyperedge $e_i$, the algorithm searches for all \motif instances that contain $e_i$ (lines~\ref{sampling_ver1:loop1:start}-\ref{sampling_ver1:loop1:end}), and to this end, the $1$-hop and $2$-hop neighbors of $e_i$ in the projected graph $\GT$ are explored. After that, for each such instance $\eijk$ of h-motif $t$, the corresponding count $\MBT$ is incremented (line~\ref{sampling_ver1:count}). 
Lastly, each estimate $\MBT$ is rescaled by multiplying it with $\frac{|E|}{3s}$ (lines~\ref{sampling_ver1:scale:start}-\ref{sampling_ver1:scale:end}), which is the reciprocal of the expected number of times that each of the \motif $t$'s instances is counted.\footnote{\scriptsize Each hyperedge is expected to be sampled $\frac{s}{|E|}$ times, and each \motif instance is counted whenever any of its $3$ hyperedges is sampled.}
This rescaling makes each estimate $\MBT$ unbiased, as formalized in Theorem~\ref{thm:sampling_ver1:accuracy}.

\vspace{-1mm}
\begin{thm}[Bias and Variance of \methodAE]
	\label{thm:sampling_ver1:accuracy}
	For every \motif t, Algorithm~\ref{sampling_ver1} provides an unbiased estimate $\MBT$ of the count $\MT$ of its instances, i.e.,	
	\vspace{-1mm}
	\begin{equation}
	\mathbb{E}[\MBT]=\MT. \label{sampling_ver1:bias}
		\vspace{-1mm}
	\end{equation}
	The variance of the estimate is
	\vspace{-1mm}
	\begin{equation}
	\mathbb{V}\mathrm{ar}[\MBT] = \frac{1}{3s}\cdot \MT\cdot (|E|-3)+\frac{1}{9s}\sum_{l=0}^{2}p_l[t] \cdot(l|E|-9),
	\label{sampling_ver1:variance}
	\vspace{-1mm}
	\end{equation}
	where $p_l[t]$ is the number of pairs of \motif $t$'s instances that share $l$ hyperedges. \vspace{-2mm}
\end{thm}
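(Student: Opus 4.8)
The plan is to express the value that Algorithm~\ref{sampling_ver1} accumulates before the rescaling on lines~\ref{sampling_ver1:scale:start}--\ref{sampling_ver1:scale:end} as a double sum of indicator variables, and then to apply linearity of expectation and a covariance expansion. The first step, and the one I expect to require the most care, is the combinatorial bookkeeping behind line~\ref{sampling_ver1:condition}: I would verify that for a fixed sampled hyperedge $e_i$, each instance $I$ of \motif $t$ with $e_i\in I$ is incremented exactly once, while no triple lacking $e_i$ is ever touched (the loops only ever build triples containing $e_i$). For a closed instance both other hyperedges lie in $N_{e_i}$, so the test reduces to $j<k$ and fires for exactly one ordering; for an open instance the disjoint endpoint is reachable only as some $e_k\in N_{e_j}\setminus N_{e_i}$, which passes the $e_k\notin N_{e_i}$ branch exactly once. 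Granting this invariant, let $X_1,\dots,X_s$ be the i.i.d.\ uniform hyperedges drawn on line~\ref{sampling_ver1:sample} and set $Y_{I,n}:=\mathbbm{1}[X_n\in I]$; then the pre-rescaling count for \motif $t$ is $\hat{M}[t]=\sum_{I}\sum_{n=1}^{s}Y_{I,n}$, the sum ranging over all instances $I$ of \motif $t$.

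Since each instance contains exactly three hyperedges, $\mathbb{E}[Y_{I,n}]=3/|E|$, so by linearity $\mathbb{E}[\hat{M}[t]]=3sM[t]/|E|$; multiplying by the factor $|E|/(3s)$ gives $\mathbb{E}[\bar{M}[t]]=M[t]$, which is Eq.~\eqref{sampling_ver1:bias}. For the variance I would first exploit independence across iterations: writing $Z_n:=\sum_{I}Y_{I,n}$, the $Z_n$ are i.i.d., so $\mathbb{V}\mathrm{ar}[\hat{M}[t]]=s\,\mathbb{V}\mathrm{ar}[Z_1]$ and hence $\mathbb{V}\mathrm{ar}[\bar{M}[t]]=(|E|/3s)^2\,s\,\mathbb{V}\mathrm{ar}[Z_1]$. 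The entire problem thus reduces to the single-draw variance $\mathbb{V}\mathrm{ar}[Z_1]=\sum_{I,J}\mathrm{Cov}(Y_{I,1},Y_{J,1})$, summed over ordered pairs of instances, where $\mathrm{Cov}(Y_{I,1},Y_{J,1})=\mathbb{P}(X_1\in I\cap J)-9/|E|^2=|I\cap J|/|E|-9/|E|^2$ and $|I\cap J|$ denotes the number of hyperedges the two instances share.

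I would then split this sum into its diagonal and off-diagonal parts, which correspond exactly to the two summands of Eq.~\eqref{sampling_ver1:variance}. The diagonal terms ($I=J$, so $|I\cap J|=3$) contribute $M[t]\,(3/|E|-9/|E|^2)$, and after scaling by $(|E|/3s)^2 s$ this collapses to $\tfrac{1}{3s}M[t](|E|-3)$. The off-diagonal terms I would group by the number $l\in\{0,1,2\}$ of hyperedges shared by two distinct instances; counting ordered pairs, there are $p_l[t]$ such pairs (so $\sum_{l}p_l[t]=M[t](M[t]-1)$), and they contribute $\sum_{l=0}^{2}p_l[t]\,(l/|E|-9/|E|^2)$, which under the same scaling becomes $\tfrac{1}{9s}\sum_{l=0}^{2}p_l[t](l|E|-9)$. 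Adding the two recovers Eq.~\eqref{sampling_ver1:variance}. Apart from the line~\ref{sampling_ver1:condition} invariant, the only point to watch is the ordered-versus-unordered convention for $p_l[t]$: the stated coefficients require counting ordered pairs (equivalently, twice the unordered count), which is what makes the constant $-9$ and the term $l|E|$ appear with coefficient $1$ rather than $2$.
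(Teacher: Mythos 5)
Your proposal is correct and takes essentially the same route as the paper's proof: indicator variables for (sample, instance) incidence, linearity of expectation for unbiasedness, and a diagonal/off-diagonal covariance decomposition (with zero covariance across independent draws) for the variance, yielding identical term-by-term contributions. Your two extra touches---explicitly verifying that the condition in line~\ref{sampling_ver1:condition} of Algorithm~\ref{sampling_ver1} counts each instance containing the sample exactly once, and pinning down that $p_l[t]$ must count \emph{ordered} pairs for the stated coefficients to come out right---are points the paper leaves implicit, and both are accurate.
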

\begin{proof}
	See Appendix~\ref{sampling_ver1:proof}. \vspace{-2mm}
\end{proof}
\noindent The time complexity of \methodAE is given in Theorem~\ref{thm:sampling_ver1:time}.
\begin{thm}[Complexity of \methodAE] \label{thm:sampling_ver1:time}
	The average time complexity of Algorithm~\ref{sampling_ver1} is $O(\frac{s}{|E|}\sum_{e_i\in E}(|e_i|\cdot|\nei|^2))$. \vspace{-2mm}
\end{thm}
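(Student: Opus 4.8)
The plan is to compute the expected running time over the random sampling and reduce it to a deterministic sum over hyperedges. Since each of the $s$ sampled hyperedges is drawn uniformly at random from $E$ (line~\ref{sampling_ver1:sample}), the expected cost contributed by one sample equals $\frac{1}{|E|}\sum_{e_i\in E}c(e_i)$, where $c(e_i)$ denotes the work the two nested loops (lines~\ref{sampling_ver1:loop1:start}--\ref{sampling_ver1:loop1:end}) perform when $e_i$ is the sampled hyperedge. By linearity of expectation, the expected total cost of the main loop is $\frac{s}{|E|}\sum_{e_i\in E}c(e_i)$, so it suffices to show $\sum_{e_i\in E}c(e_i)=O\!\left(\sum_{e_i\in E}|e_i|\cdot|\nei|^2\right)$. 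The sampling itself costs $O(s)$ and the rescaling loop runs over a constant number ($26$) of \motifs, so both are dominated and can be folded into the bound.

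Next I would expand $c(e_i)$. For the sampled $e_i$, the algorithm iterates over $e_j\in\nei$ and, for each, over $e_k\in(\nei\cup\nej)\setminus\{e_i,e_j\}$, spending at most $O(\min(|e_i|,|e_j|,|e_k|))$ per triple to evaluate $\hijk$ by Lemma~\ref{lemma:motif:time} (this also covers the $O(1)$ loop overhead, since $|e_i|\ge 1$). I would split the inner range according to whether $e_k\in\nei$ (a $1$-hop neighbor) or $e_k\in\nej\setminus\nei$ (a genuine $2$-hop neighbor reached through $e_j$). For the $1$-hop part there are at most $|\nei|^2$ triples and, bounding the per-triple cost by $|e_i|$, this contributes $O(|e_i|\cdot|\nei|^2)$; summed over $e_i$ this is already of the desired form.

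The crux is the $2$-hop part, and this is the step I expect to be the main obstacle. Bounding its per-triple cost by $|e_i|$ would give $\sum_{e_i}|e_i|\sum_{e_j\in\nei}|\nej|$, which does not obviously collapse to the target sum. The trick is to instead bound $\min(|e_i|,|e_j|,|e_k|)\le|e_j|$, charging each $2$-hop triple to its middle hyperedge $e_j$. Then the $2$-hop cost of processing $e_i$ is $O\!\left(\sum_{e_j\in\nei}|e_j|\cdot|\nej|\right)$, and summing over all $e_i\in E$ lets me reindex by the middle hyperedge: each $e_j$ plays the middle role exactly once for each of its $|\nej|$ neighbors $e_i$, so $\sum_{e_i\in E}\sum_{e_j\in\nei}|e_j|\cdot|\nej|=\sum_{e_j\in E}|e_j|\cdot|\nej|^2$. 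Combining the $1$-hop and $2$-hop bounds yields $\sum_{e_i\in E}c(e_i)=O\!\left(\sum_{e_i\in E}|e_i|\cdot|\nei|^2\right)$, and multiplying by $\frac{s}{|E|}$ gives the claimed average complexity.
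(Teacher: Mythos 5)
Your proof is correct and takes essentially the same route as the paper's: the same per-sample cost decomposition into a $1$-hop term bounded by $|e_i|\cdot|\nei|^2$ and a $2$-hop term bounded by $\sum_{e_j\in\nei}|e_j|\cdot|\nej|$, using Lemma~\ref{lemma:motif:time} with $\min(|e_i|,|e_j|,|e_k|)$ charged to $|e_i|$ and $|e_j|$ respectively. The only cosmetic difference is in how the expectation is evaluated: the paper writes the random cost with indicators and uses $\mathbb{E}[\mathbb{1}(e_j \text{ is adjacent to the sample})]=\frac{|\nej|}{|E|}$, whereas you average $c(e_i)$ over the uniform sample and reindex the double sum $\sum_{e_i\in E}\sum_{e_j\in\nei}|e_j|\cdot|\nej|=\sum_{e_j\in E}|e_j|\cdot|\nej|^2$ combinatorially --- the same calculation expressed in two equivalent ways.
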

\begin{proof}
	Assume all sets and maps are implemented using hash tables.
	For a sample hyperedge $e_i$, computing $\nei \cup \nej$ for every $e_j \in \nei$ takes $O(\sum_{e_j\in \nei} (|\nei\cup\nej|))$ time, and by Lemma~\ref{lemma:motif:time}, computing $\hijk$ for all considered \motif instances takes $O(\min(|e_i|, |e_j|)\cdot \sum_{e_j\in \nei} |\nei\cup\nej|)$ time.
	Thus, from $|\nei\cup\nej| \leq |\nei|+|\nej|$, the time complexity for processing a sample $e_i$ is 
	\vspace{-1mm}
	\begin{multline*}
	O(\min(|e_i|, |e_j|)\cdot \sum\nolimits_{e_j\in \nei} (|\nei|+|\nej|)) \\ =O(|e_i|\cdot|\nei|^2 + \sum\nolimits_{e_j\in \nei}(|e_j|\cdot|\nej|)),
		\vspace{-1mm}
	\end{multline*} 
	which can be written as
	\vspace{-1mm}
	\begin{multline*}
	O(\sum\nolimits_{e_i\in E}(\mathbb{1}(e_i \text{ is sampled})\cdot|e_i|\cdot|\nei|^2) \\ + \sum\nolimits_{e_j\in E}(\mathbb{1}(e_j \text{ is adjacent to the sample})\cdot|e_j|\cdot|\nej|)).
		\vspace{-1mm}
	\end{multline*}
	From this, linearity of expectation, $\mathbb{E}[\mathbb{1}(e_i$ is sampled$)]=\frac{1}{|E|}$, and $\mathbb{E}[\mathbb{1}(e_j$ is adjacent to the sample$)]=\frac{|\nej|}{|E|}$, the average time complexity per sample hyperedge becomes
	$O(\frac{1}{|E|}$ $\sum_{e_i\in E}(|e_i|\cdot|\nei|^2))$.
	Hence, the total time complexity for processing $s$ samples is $O(\frac{s}{|E|}\sum_{e_i\in E}(|e_i|\cdot|\nei|^2))$.\qedhere	\vspace{-2mm}
\end{proof}

\IncMargin{0.5em}
\begin{algorithm}[t]
	\setstretch{1.25}
	\small
	\caption{\small \methodAW: Approximate \Motif Counting Based on \Hwedge Sampling}
	\label{sampling_ver2}
	\SetAlgoLined
	\SetKwInOut{Input}{Input}
	\SetKwInOut{Output}{Output}
	\nonl \hspace{-5mm} \Input{ \ \ (1) input hypergraph: $G=(V,E)$ \\ (2) projected graph: $\GT=(E,\PT, \omega)$ \\  (3) number of samples: $r$}
	\nonl \hspace{-5mm} \Output{estimated count of each \motif $t$'s instances: $\MHT$}
	$\MH\leftarrow$ map whose default value is $0$ \\
	\For{$n\leftarrow 1...r$\label{sampling_ver2:loop1}}{
		$\wedge_{ij}\leftarrow$ a uniformly random \hwedge \label{sampling_ver2:sample} \\ 
		\For{\textbf{each} hyperedge $e_k \in (N_{e_i} \cup N_{e_j} \setminus \{e_i,e_j\})$\label{sampling_ver2:loop2}}{
			$\MH[\hijk] \mathrel{+}= 1$
			\label{sampling_ver2:count}
		}
		\label{sampling_ver2:loop2:end}
	}
	\For{\textbf{each} \motif $t$\label{sampling_ver2:rescale:start}}{
		\eIf(\hfill $\triangleright$ \texttt{\color{blue}open \motifs}){17 $\leq$ t $\leq$ 22}{ 
			$\MHT \leftarrow \MHT \cdot \tfrac{|\wedge|}{2r}$ 
		}
		(\hfill $\triangleright$ \texttt{\color{blue}closed \motifs}){
			$\MHT \leftarrow \MHT \cdot \tfrac{|\wedge|}{3r}$
		}
	}\label{sampling_ver2:rescale:end}
	\textbf{return} $\MH$ 
\end{algorithm}
\DecMargin{0.5em}

\smallsection{\methodAW: \Hwedge Sampling (Algorithm~\ref{sampling_ver2}):}

\noindent \methodAW (Algorithm~\ref{sampling_ver2})  provides a better trade-off between speed and accuracy than \methodAE.
Different from \methodAE, which samples hyperedges, \methodAW is based on \hwedge sampling. 
It selects $r$ \hwedges uniformly at random with replacement (line~\ref{sampling_ver2:sample}), and for each sampled \hwedge $\wedge_{ij} \in \wedge$, it searches for all \motif instances that contain $\wedge_{ij}$ (lines~\ref{sampling_ver2:loop2}-\ref{sampling_ver2:loop2:end}). To this end, the hyperedges that are adjacent to $e_i$ or $e_j$ in the projected graph $\GT$ are considered (line~\ref{sampling_ver2:loop2}). For each such instance $\eijk$ of \motif $t$, the corresponding estimate $\MHT$ is incremented (line~\ref{sampling_ver2:count}). 
Lastly, each estimate $\MHT$ is rescaled so that it unbiasedly estimates $\MT$, as formalized in Theorem~\ref{thm:sampling_ver2:accuracy}.
To this end, it is multiplied by the reciprocal of the expected number of times that each instance of \motif $t$ is counted.\footnote{\scriptsize Note that each instance of open and closed \motifs contains $2$ and $3$ \hwedges, respectively.
	Each instance of closed \motifs is counted if one of the $3$ \hwedges in it is sampled, while that of open \motifs is counted if one of the $2$ \hwedges in it is sampled.
	Thus, on average, each instance of open and closed \motifs is counted ${3r}/{|\wedge|}$ and ${2r}/{|\wedge|}$ times, respectively.}

\vspace{-1mm}
\begin{thm}[Bias and Variance of \methodAW]
	\label{thm:sampling_ver2:accuracy}
	For every \motif t, Algorithm~\ref{sampling_ver2} provides an unbiased estimate $\MHT$ of the count $\MT$ of its instances, i.e.,	
	\begin{equation}
	\mathbb{E}[\MHT]=\MT. \label{sampling_ver2:bias}
	\vspace{-1mm}
	\end{equation}
	For every closed \motif $t$, the variance of the estimate is
	\vspace{-1mm}
	\begin{equation}
	\mathbb{V}\mathrm{ar}[\MHT] = \frac{1}{3r}\cdot\MT\cdot(|\wedge|-3)+\frac{1}{9r}\sum_{n=0}^{1}q_n[t]\cdot(n|\wedge|-9), \label{sampling_ver2:variance:closed}
	\vspace{-1mm}
	\end{equation}
	where $q_n[t]$ is the number of pairs of \motif $t$'s instances that share $n$ hyperwedges.
	For every open \motif $t$, the variance is
	\vspace{-1mm}
	\begin{equation}
	\mathbb{V}\mathrm{ar}[\MHT] = \frac{1}{2r} \cdot \MT\cdot(|\wedge|-2)+\frac{1}{4r}\sum_{n=0}^{1}q_n[t]\cdot(n|\wedge|-4). \label{sampling_ver2:variance:open} 
	\vspace{-2mm}
	\end{equation}
\end{thm}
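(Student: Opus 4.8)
The plan is to treat each of the $r$ samples as an independent trial, express $\MHT$ as a scaled sum of indicator variables, derive the mean by linearity of expectation, and derive the variance by a single second-moment computation over pairs of instances. Write $c_t$ for the number of \hwedges contained in each instance of \motif $t$ (namely $c_t=2$ if $t$ is open and $c_t=3$ if it is closed), so that the rescaling factor applied in lines~\ref{sampling_ver2:rescale:start}--\ref{sampling_ver2:rescale:end} is exactly $|\wedge|/(c_t r)$. For the $n$-th sampled \hwedge $\wedge^{(n)}$ and each instance $I$ of \motif $t$, let $X_I^{(n)}:=\mathbb{1}[\wedge^{(n)}\text{ is one of the }c_t\text{ \hwedges of }I]$. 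First I would argue, directly from the loop in lines~\ref{sampling_ver2:loop2}--\ref{sampling_ver2:loop2:end}, that sampling $\wij$ increments the count of $I$ by exactly $X_I^{(n)}$: the third hyperedge of any instance containing $\wij$ must be adjacent to $e_i$ or $e_j$ (otherwise the triple is disconnected) and hence lies in $N_{e_i}\cup N_{e_j}$, while the deduplicated union counts it once. Thus the pre-scaling count of $t$ equals $\sum_{I}\sum_{n=1}^{r}X_I^{(n)}$ (summed over the $\MT$ instances of \motif $t$), and $\MHT=\tfrac{|\wedge|}{c_t r}\sum_I\sum_n X_I^{(n)}$.

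Unbiasedness is then immediate: since $\wedge^{(n)}$ is uniform over the $|\wedge|$ \hwedges, $\mathbb{E}[X_I^{(n)}]=c_t/|\wedge|$, so by linearity $\mathbb{E}[\MHT]=\tfrac{|\wedge|}{c_t r}\cdot r\cdot\MT\cdot\tfrac{c_t}{|\wedge|}=\MT$, which is Eq.~\eqref{sampling_ver2:bias}. For the variance, the key structural fact is that the samples are drawn independently with replacement, so the per-sample totals $Z^{(n)}:=\sum_I X_I^{(n)}$ are i.i.d.; hence $\mathbb{V}\mathrm{ar}[\MHT]=\big(\tfrac{|\wedge|}{c_t r}\big)^2\cdot r\cdot\mathbb{V}\mathrm{ar}[Z^{(1)}]$, and it suffices to analyze a single sample. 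Expanding $\mathbb{E}[(Z^{(1)})^2]=\sum_{I,J}\mathbb{E}[X_I^{(1)}X_J^{(1)}]$ and splitting into the diagonal ($I=J$) and off-diagonal ($I\neq J$) parts, the diagonal contributes $\MT\cdot c_t/|\wedge|$ (each $X_I$ is an indicator), while each off-diagonal term equals the probability that a uniformly random \hwedge lies in both $I$ and $J$, i.e.\ $w(I,J)/|\wedge|$, where $w(I,J)$ is the number of \hwedges shared by $I$ and $J$.

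The combinatorial core is to evaluate $\sum_{I\neq J}w(I,J)$ and to justify the range $n\in\{0,1\}$ appearing in the statement. Here I would prove that two \emph{distinct} instances share at most one \hwedge: a shared \hwedge $\{x,y\}$ forces both $x$ and $y$ into each three-hyperedge set, so two shared \hwedges would force $I\cap J$ to contain at least three hyperedges and hence $I=J$, a contradiction. (This is precisely what truncates the sum at $n\le 1$, in contrast with the hyperedge-sampling bound $l\le 2$ in Theorem~\ref{thm:sampling_ver1:accuracy}.) Consequently $\sum_{I\neq J}w(I,J)=q_1[t]$ once $q_n[t]$ is read as counting ordered pairs of distinct instances that share $n$ \hwedges, and the total number of such ordered pairs yields the identity $q_0[t]+q_1[t]=\MT(\MT-1)$. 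Assembling the pieces gives $\mathbb{V}\mathrm{ar}[Z^{(1)}]=\tfrac{c_t\MT}{|\wedge|}+\tfrac{q_1[t]}{|\wedge|}-\tfrac{c_t^2\MT^2}{|\wedge|^2}$; multiplying by $\big(|\wedge|/c_t r\big)^2\cdot r$ and substituting $\MT^2=\MT+q_0[t]+q_1[t]$ to cancel the $\MT^2$ term, then specializing to $c_t=3$ and $c_t=2$, produces Eq.~\eqref{sampling_ver2:variance:closed} and Eq.~\eqref{sampling_ver2:variance:open}, respectively.

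The main obstacle I anticipate is the bookkeeping in the second-moment step together with the algebraic reduction to the stated closed form. In particular, the clean cancellation of the $\MT^2$ term hinges on the identity $q_0[t]+q_1[t]=\MT(\MT-1)$ with $q_n[t]$ interpreted as ordered pairs of distinct instances (consistently with $p_l[t]$ in Theorem~\ref{thm:sampling_ver1:accuracy}); pinning down this convention, and hence the correct coefficient on the $q_1[t]\,|\wedge|$ term, is the delicate point, whereas unbiasedness and the reduction to a single i.i.d.\ sample are routine. Verifying the ``at most one shared \hwedge'' claim is conceptually easy but is what fixes the summation range $n\in\{0,1\}$ and distinguishes the \hwedge-sampling variance from its hyperedge-sampling counterpart.
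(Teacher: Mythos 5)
Your proposal is correct and follows essentially the same route as the paper's proof: indicator variables for (sample, instance) incidence, unbiasedness by linearity with inclusion probability $w[t]/|\wedge|$, and a variance computation whose cross terms are governed by the number of shared hyperwedges between pairs of instances, with independence across samples. The only differences are organizational—you compute the second moment of i.i.d.\ per-sample totals and then eliminate the $M[t]^2$ term via $q_0[t]+q_1[t]=M[t](M[t]-1)$, whereas the paper expands variances and covariances of the individual indicators directly—and you make explicit two facts the paper leaves implicit (that the loop counts each incident instance exactly once, and that distinct instances share at most one hyperwedge, which truncates the sum at $n\le 1$).
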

\begin{proof}
	See Appendix~\ref{sampling_ver2:proof}. \vspace{-2mm}
\end{proof}
\begin{figure*}[t]
	\vspace{-3mm}
\end{figure*}

\noindent The time complexity of \methodAW is given in Theorem~\ref{thm:sampling_ver2:time}.
\vspace{-1mm}
\begin{thm}[Complexity of \methodAW] \label{thm:sampling_ver2:time}
	The average time complexity of Algorithm~\ref{sampling_ver2} is $O(\frac{r}{|\wedge|}\sum_{e_i\in E}(|e_i|\cdot|\nei|^2))$. \vspace{-2mm}
\end{thm}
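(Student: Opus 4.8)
The plan is to follow the same template as the proof of Theorem~\ref{thm:sampling_ver1:time} for \methodAE: first bound the work done while processing a single sampled \hwedge, then average over the uniform random choice of \hwedge, and finally multiply by the number $r$ of independent samples. Throughout I assume all sets and maps are implemented with hash tables, so membership tests, insertions, and lookups take $O(1)$ time, and a uniformly random \hwedge (line~\ref{sampling_ver2:sample}) is drawn in $O(1)$ time from the \hwedge list maintained in $\GT$.

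First I would fix a sampled \hwedge $\wij$ and bound the cost of lines~\ref{sampling_ver2:loop2}--\ref{sampling_ver2:loop2:end}. Forming and iterating over $\nei\cup\nej\setminus\{e_i,e_j\}$ takes $O(|\nei|+|\nej|)$ time and produces at most $|\nei|+|\nej|$ candidates $e_k$. By Lemma~\ref{lemma:motif:time}, computing $\hijk$ for each candidate costs $O(\min(|e_i|,|e_j|,|e_k|))\le O(\min(|e_i|,|e_j|))$ time. Hence processing the single sample $\wij$ takes $O(\min(|e_i|,|e_j|)\cdot(|\nei|+|\nej|))$ time. The final rescaling loop (lines~\ref{sampling_ver2:rescale:start}--\ref{sampling_ver2:rescale:end}) ranges over the $26$ \motifs and contributes only $O(1)$, so it is dominated.

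Next I would average over the uniform \hwedge. Since each of the $|\wedge|$ \hwedges is drawn with probability $1/|\wedge|$, linearity of expectation gives an expected per-sample cost of $\tfrac{1}{|\wedge|}\sum_{\wij\in\wedge}\min(|e_i|,|e_j|)\cdot(|\nei|+|\nej|)$. Using $\min(|e_i|,|e_j|)\le|e_i|$ and $\min(|e_i|,|e_j|)\le|e_j|$, I bound the summand by $|e_i|\cdot|\nei|+|e_j|\cdot|\nej|$. The key step, and the one I expect to require the most care, is re-indexing this sum over \hwedges as a sum over hyperedges: each hyperedge $e_i$ is an endpoint of exactly $|\nei|$ \hwedges (its degree in $\GT$), so the term $|e_i|\cdot|\nei|$ is generated once per incident \hwedge, i.e. $|\nei|$ times in total. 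Therefore
\[
\sum_{\wij\in\wedge}\bigl(|e_i|\cdot|\nei|+|e_j|\cdot|\nej|\bigr)=\sum_{e_i\in E}|\nei|\cdot\bigl(|e_i|\cdot|\nei|\bigr)=\sum_{e_i\in E}|e_i|\cdot|\nei|^2 .
\]

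Substituting this back, the expected cost per sample is $O\!\left(\tfrac{1}{|\wedge|}\sum_{e_i\in E}|e_i|\cdot|\nei|^2\right)$, and multiplying by the $r$ independent samples yields the claimed bound $O\!\left(\tfrac{r}{|\wedge|}\sum_{e_i\in E}|e_i|\cdot|\nei|^2\right)$. The only genuinely delicate point is applying the double-counting factor $|\nei|$ correctly in the re-indexing; once that is in place, the rest is a direct transcription of the \methodAE argument with hyperedge sampling replaced by \hwedge sampling.
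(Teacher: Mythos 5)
Your proposal is correct and follows essentially the same route as the paper's proof: bound the per-sample cost by $O(\min(|e_i|,|e_j|)\cdot(|\nei|+|\nej|))=O(|e_i|\cdot|\nei|+|e_j|\cdot|\nej|)$ via Lemma~\ref{lemma:motif:time}, then average using the fact that each hyperedge $e_i$ lies in $|\nei|$ of the $|\wedge|$ hyperwedges, and multiply by $r$. The only cosmetic difference is that the paper expresses this averaging step with indicator random variables and linearity of expectation, whereas you re-index the sum over hyperwedges as a sum over hyperedges directly; the two computations are identical.
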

\begin{proof}
	Assume all sets and maps are implemented using hash tables.
	For a sample \hwedge $\wij$, computing $\nei \cup \nej$ takes $O(|\nei\cup\nej|)$ time, and by Lemma~\ref{lemma:motif:time}, computing $\hijk$ for all considered \motif instances takes $O(\min(|e_i|, |e_j|)\cdot |\nei\cup\nej|)$ time.
	Thus, from $|\nei\cup\nej| \leq |\nei|+|\nej|$, the time complexity for processing a sample $\wij$ is 
	$O(\min(|e_i|, |e_j|)\cdot (|\nei|+|\nej|))=O(|e_i|\cdot|\nei| + |e_j|\cdot|\nej|),$
	which can be written as
	\vspace{-1mm}
	\begin{multline*}
	O(\sum\nolimits_{e_i\in E}(\mathbb{1}(e_i \text{ is included in the sample})\cdot|e_i|\cdot|\nei|) \\ + \sum\nolimits_{e_j\in E}(\mathbb{1}(e_j \text{ is included in the sample})\cdot|e_j|\cdot|\nej|)).
		\vspace{-1mm}
	\end{multline*}
	From this, linearity of expectation, $\mathbb{E}[\mathbb{1}(e_i$ is included in the sample$)]=\frac{|\nei|}{|\wedge|}$, and $\mathbb{E}[\mathbb{1}(e_j$ is included in the sample$)]=\frac{|\nej|}{|\wedge|}$, the average time complexity per sample \hwedge is
	$O(\frac{1}{|\wedge|}\sum_{e_i\in E}(|e_i|\cdot|\nei|^2))$.
	Hence, the total time complexity for processing $r$ samples is $O(\frac{r}{|\wedge|}\sum_{e_i\in E}(|e_i|\cdot|\nei|^2))$.\qedhere	 \vspace{-2mm}
\end{proof}

\smallsection{Comparison of \methodAE and \methodAW:} \label{compare:sampling}
Empirically, \methodAW provides a better trade-off between speed and accuracy than \methodAE, as presented in Section~\ref{sec:exp:algo}. We provide an analysis that supports this observation. 
Assume that the numbers of samples in both algorithms are set so that $\alpha=\frac{s}{|E|}=\frac{r}{|\wedge|}$.
For each \motif $t$,
since both estimates $\MBT$ of \methodAE and $\MHT$ of \methodAW are unbiased (see Eqs.~\eqref{sampling_ver1:bias} and \eqref{sampling_ver2:bias}), we only need to compare their variances.
By Eq.~\eqref{sampling_ver1:variance}, $\mathbb{V}\mathrm{ar}[\MBT]=O(\frac{\MT+p_1[t]+p_2[t]}{\alpha})$, and by Eq.~\eqref{sampling_ver2:variance:closed} and Eq.~\eqref{sampling_ver2:variance:open}, $\mathbb{V}\mathrm{ar}[\MHT]=O(\frac{\MT+q_1[t]}{\alpha})$.
By definition, $q_1[t]\leq p_2[t]$, and thus $\frac{\MT+q_1[t]}{\alpha} \leq \frac{\MT+p_1[t]+p_2[t]}{\alpha}$. 
Moreover, in real-world hypergraphs, $p_1[t]$ tends to be several orders of magnitude larger than the other terms (i.e., $p_2[t]$, $q_1[t]$, and $\MT$), and thus $\MBT$ of \methodAE tends to have  larger variance (and thus larger estimation error) than $\MHT$ of \methodAW.
Despite this fact, as shown in Theorems~\ref{thm:sampling_ver1:time} and \ref{thm:sampling_ver2:time}, \methodAE and \methodAW have the same time complexity, $O(\alpha\cdot \sum_{e_i\in E}(|e_i|\cdot|\nei|^2))$.
Hence, \methodAW is expected to give a better trade-off between speed and accuracy than \methodAE, as confirmed empirically  in Section~\ref{sec:exp:algo}.

\begin{table}[t]
	\begin{center}
		\caption{\label{dataset_table}Statistics of 11 real hypergraphs from 5 domains.} 
		\vspace{-1mm}
		\scalebox{0.77}{
			\begin{tabular}{l|c|c|c|c|c}
				\toprule
				\textbf{Dataset} & \textbf{$|V|$} & \textbf{$|E|$} & \textbf{$|\bar{e}|$*} & \textbf{$|\wedge|$} & \textbf{\# \Motifs}\\
				\midrule
				coauth-DBLP & 1,924,991 & 2,466,792 & \change{25} & 125M & 26.3B $\pm$ 18M\\
				coauth-geology& 1,256,385 & 1,203,895 & \change{25} & 37.6M & 6B $\pm$ 4.8M\\
				coauth-history& 1,014,734 & 895,439 & \change{25} & 1.7M & 83.2M\\
				\midrule
				contact-primary  & 242 & 12,704 & \change{5} & 2.2M & 617M\\
				contact-high & 327 & 7,818 & \change{5} & 593K & 69.7M\\
				\midrule
				email-Enron & 143 & 1,512 & \change{18} & 87.8K & 9.6M\\
				email-EU & 998 & 25,027 & \change{25} & 8.3M & 7B\\
				\midrule
				tags-ubuntu & 3,029 & 147,222 & \change{5} & 564M & 4.3T $\pm$ 1.5B\\
				tags-math & 1,629 & 170,476 & \change{5} & 913M & 9.2T $\pm$ 3.2B\\
				\midrule
				threads-ubuntu & 125,602 & 166,999 & \change{14} & 21.6M & 11.4B\\
				threads-math & 176,445 & 595,749 & \change{21} & 647M & 2.2T $\pm$ 883M\\
				\bottomrule
				\multicolumn{6}{l}{$*$ The maximum size of a hyperedge.}
		\end{tabular}}
	\end{center}
	\vspace{-3mm}
\end{table}

\vspace{-0.5mm}
\subsection{Parallel and On-the-fly Implementations}\label{sec:method:par_fly}

We discuss parallelization of \method and then on-the-fly computation of projected graphs.

\smallsection{Parallelization:}
All versions of \method and hypergraph projection are easily parallelized.
Specifically, we can parallelize hypergraph projection and \methodE by letting multiple threads process different hyperedges (in line~\ref{hyperwedge_counting:loop1} of Algorithm~\ref{hyperwedge_counting} and line~\ref{exact_motif:loop1} of Algorithm~\ref{exact_motif_counting}) independently in parallel. 
Similarly, we can parallelize \methodAE and \methodAW by letting multiple threads sample and process different hyperedges (in line~\ref{sampling_ver1:sample} of Algorithm~\ref{sampling_ver1}) and \hwedges (in line~\ref{sampling_ver2:sample} of Algorithm~\ref{sampling_ver2}) independently in parallel. 
The estimated counts of the same \motif obtained by different threads are summed up only once before they are returned as outputs.
We present some empirical results in Section~\ref{sec:exp:algo}.

\smallsection{\Motif Counting without Projected Graphs:} \label{method:on_the_fly}
If the input hypergraph $G$ is large, computing its projected graph $\GT$ (Algorithm~\ref{hyperwedge_counting}) is time and space consuming.
Specifically, building $\GT$ takes $O(\sum_{\wij\in\PT}|e_i\cap e_j|)$ time (see Lemma~\ref{lemma:projection:time}) and requires $O(|E|+|\wedge|)$ space, which often exceeds $O(\sum_{e_i\in E}$ $|e_i|)$ space required for storing $G$.
Thus, instead of precomputing $\GT$ entirely, we can build it incrementally while memoizing partial results within a given memory budget.
For example, in \methodAW (Algorithm~\ref{sampling_ver2}), we compute the neighborhood of a hyperedge $e_{i}\in E$ in $\GT$ (i.e., $\{(k,\omega(\wik)):k\in\nei\}$) only if (1) a \hwedge with $e_{i}$ (e.g., $\wij$) is sampled (in line~\ref{sampling_ver2:sample}) and (2) its neighborhood is not memoized, \change{as described in the pseudocode in Appendix G of \cite{full}.
Whether they are computed on the fly or read from memoized results, we always use exact neighborhoods, and thus this change does not affect the accuracy of the considered algorithm.} 

This incremental computation of $\GT$ can be beneficial in terms of speed since it skips projecting the neighborhood of a hyperedge if no \hwedge containing it is sampled.
However, it can also be harmful if memoized results exceed the memory budget and some parts of $\GT$ need to be rebuilt multiple times.
Then, given a memory budget in bits, how should we prioritize hyperedges if all their neighborhoods cannot be memoized? 
According to our experiments, despite their large size, memoizing the neighborhoods of hyperedges with high degree in $\GT$ makes \methodAW faster than memoizing the neighborhoods of randomly chosen hyperedges or least recently used (LRU) hyperedges. 
We experimentally examine the effects of the memory budget in Section~\ref{sec:exp:algo}. 

\vspace{-0.5mm}
\section{Experiments}
\label{sec:exp}

\begin{table*}
	\centering
	\vspace{-4mm}
	\caption{\label{absolute_table} 
		Real-world and random hypergraphs have distinct distributions of \motif instances. We report the absolute counts of each \motif's instances in a real-world hypergraph from each domain and its corresponding random hypergraph. 
		To compare the counts in both hypergraphs, we measure the relative count (RC) of each \motif. 
		We also rank the counts, and we report each \motif's rank difference (RD) in the real-world and corresponding random hypergraphs.}
	\includegraphics[width=1.0\textwidth]{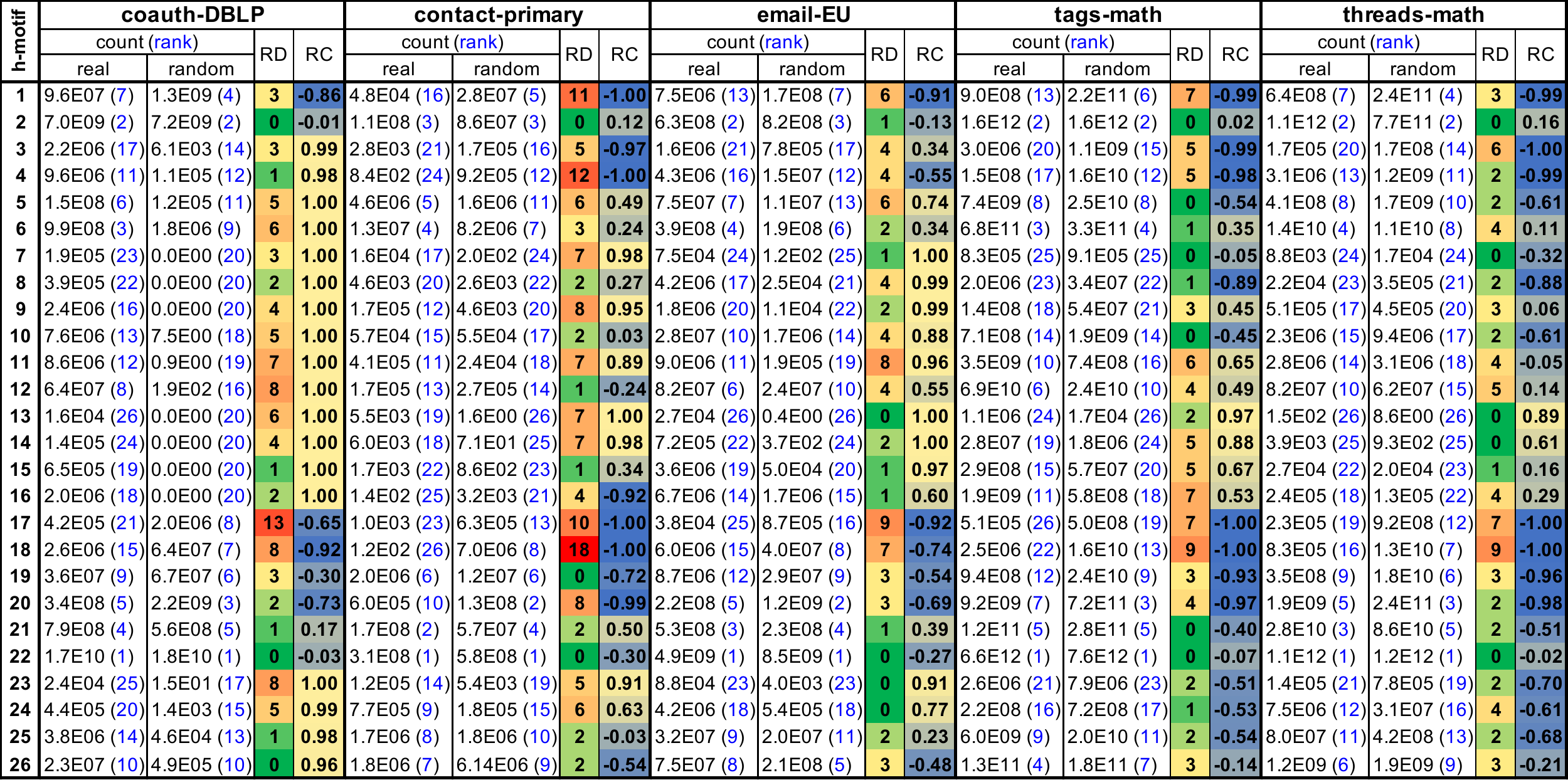}
\end{table*}

%

In this section, we review our experiments that we design for answering the following questions:
\vspace{-0.5mm}
\begin{itemize}[leftmargin=*]
	\itemsep-0.2em 
	\item {\bf Q1. Comparison with Random:} Does counting instances of different \motifs reveal structural design principles of real-world hypergraphs distinguished from those of random hypergraphs? 
	\item {\bf Q2. Comparison across Domains:} Do characteristic profiles capture local structural patterns of hypergraphs unique to each domain?
	\change{\item {\bf Q3. Observations and Applications:} What are interesting discoveries and applications of \motifs in real-world hypergraphs?}
	\item {\bf Q4. Performance of Counting Algorithms:} How fast and accurate are the different versions of \method? Does the advanced version outperform the basic ones?
\end{itemize}


\vspace{-0.5mm}
\subsection{Experimental Settings}

\smallsection{Machines:} We conducted all the experiments on a machine with an AMD Ryzen 9 3900X CPU and 128GB RAM. 

\smallsection{Implementations:}
We implemented all versions of \method using C++ and OpenMP. 

\smallsection{Datasets:} We used the following eleven real-world hypergraphs from five different domains:
\vspace{-0.5mm}
\begin{itemize}[leftmargin=*]
	\itemsep-0.2em 
	\item \textbf{\texttt{co-authorship}} (coauth-DBLP, coauth-geology~\cite{sinha2015overview}, and coauth-history~\cite{sinha2015overview}): A node represents an author. A hyperedge represents all authors of a publication. 
	\item \textbf{\texttt{contact}} (contact-primary~\cite{stehle2011high} and contact-high~\cite{mastrandrea2015contact}): A node represents a person. A hyperedge represents a group interaction among individuals.
	\item \textbf{\texttt{email}} (email-Enron~\cite{klimt2004enron} and email-EU~\cite{leskovec2005graphs,yin2017local}): A node represents an e-mail account. A hyperedge consists of the sender and all receivers of an email.
	\item \textbf{\texttt{tags}} (tags-ubuntu and tags-math): A node represents a tag. A hyperedge represents all tags attached to a post.
	\item \textbf{\texttt{threads}} (threads-ubuntu and threads-math): A node represents a user. A hyperedge groups all users participating in a thread.
\end{itemize}
\vspace{-0.5mm}
These hypergraphs are made public by the authors of \cite{benson2018simplicial}, and in Table~\ref{dataset_table} we provide some statistics of the hypergraphs after removing duplicated hyperedges.
We used $\methodE$ for the \textit{coauth-history} dataset, the \textit{threads-ubuntu} dataset, and all datasets from the \texttt{contact} and \texttt{email} domains. For the other datasets, we used $\methodAW$ with $r=2,000,000$, unless otherwise stated. 
We used a single thread unless otherwise stated.
We computed CPs based on five hypergraphs randomized as described in Section~\ref{sec:concept:motif}.

\vspace{-0.5mm}
\subsection{Q1. Comparison with Random}
\label{sec:exp:random}

We analyze the counts of different \motifs' instances in real and random hypergraphs.
In Table~\ref{absolute_table}, we report the (approximated) count of each \motif $t$'s instances in each real hypergraph with the corresponding count averaged over five random hypergraphs obtained as described in Section~\ref{sec:concept:motif}.
\change{For each \motif $t$, we measure its relative count, which we define as $\frac{\MT- M_{rand}[t]}{\MT+  M_{rand}[t]}.$}
We also rank \motifs by the counts of their instances and examine the difference between the ranks in real and corresponding random hypergraphs.
As seen in the table, the count distributions in real hypergraphs are clearly distinguished from those of random hypergraphs.




\smallsection{\Motifs in Random Hypergraphs:}
We notice  that instances of \motifs $17$ and $18$ appear much more frequently in random hypergraphs than in real hypergraphs from all domains. For example, instances of \motif $17$ appear only about $510$ thousand times in the \textit{tags-math} dataset, while they appear about $500$ million times (about $\mathbf{980 \times}$ more often) in the corresponding randomized hypergraph. In the \textit{threads-math} dataset, instances of \motif $18$ appear about $830$ thousand times, while they appear about $13$ billion times (about $\mathbf{15,660\times}$ more often) in the corresponding randomized hypergraph. 
Instances of \motifs $17$ and $18$ consist of a hyperedge and its two disjoint subsets (see Figure~\ref{motif_three_hyperedges}).



\smallsection{\Motifs in Co-authorship Hypergraphs:}
We observe that instances of \motifs $10$, $11$ and $12$ appear more frequently in all three hypergraphs from the \texttt{co-authorship} domain than in the corresponding random hypergraphs.
Although there are only about $190$ instances of \motif $12$ in the corresponding random hypergraphs, there are about $64$ million such instances (about $\mathbf{337,000\times}$ more instances) in the \textit{coauth-DBLP} dataset.
As seen in Figure~\ref{motif_three_hyperedges}, in instances of \motifs $10$, $11$, and $12$, a hyperedge is overlapped with the two other overlapped hyperedges in three different ways.


\smallsection{\Motifs in Contact Hypergraphs:}
Instances of h-mo\-tifs $9$, $13$, and $14$ are noticeably more common in both \texttt{contact} datasets than in the corresponding random hypergraphs.  
As seen in Figure~\ref{motif_three_hyperedges}, in instances of \motifs $9$, $13$ and $14$, hyperedges are tightly connected and nodes are mainly located in the intersections of all or some hyperedges.

\smallsection{\Motifs in Email Hypergraphs:} 
Both email datasets contain particularly many instances of \motifs $8$ and $10$, compared to the corresponding random hypergraphs.
As seen in Figure~\ref{motif_three_hyperedges}, instances of \motifs $8$ and $10$ consist of three hyperedges one of which contains most nodes.

\begin{figure}[t!]  
	\centering
	\vspace{-5mm}
	\includegraphics[width=0.48\textwidth]{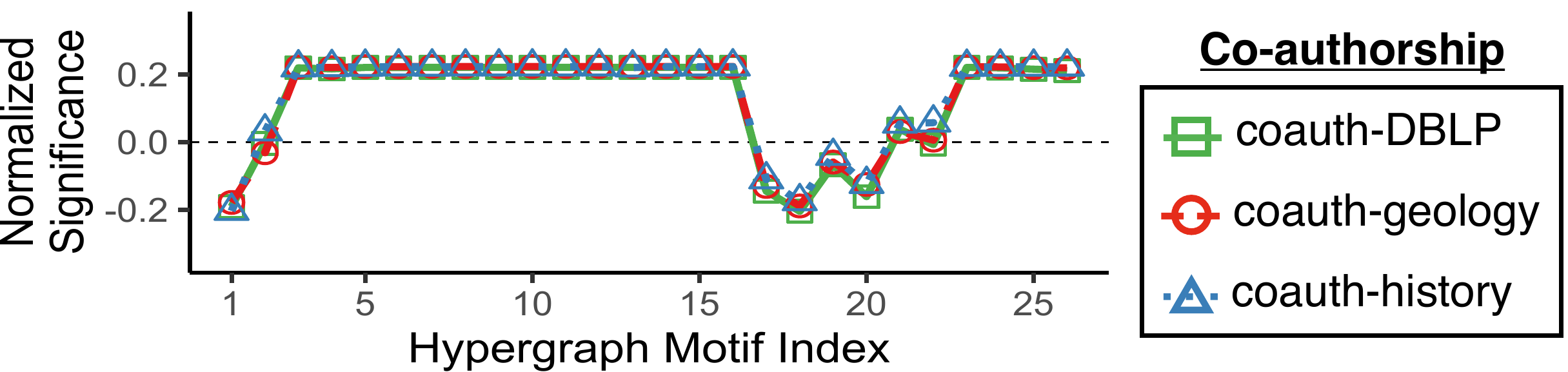}  \\
	\includegraphics[width=0.48\textwidth]{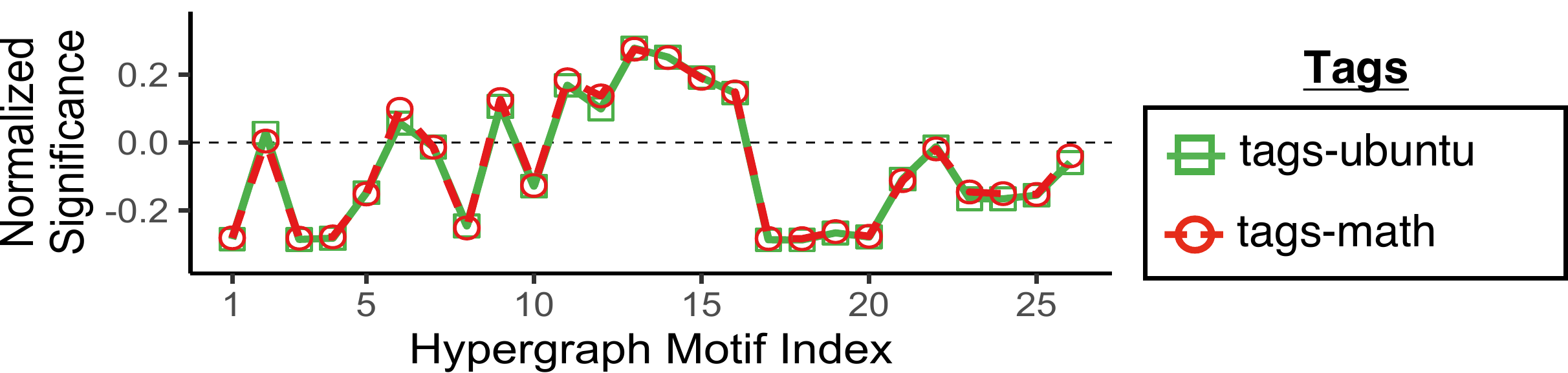} \\
	\includegraphics[width=0.48\textwidth]{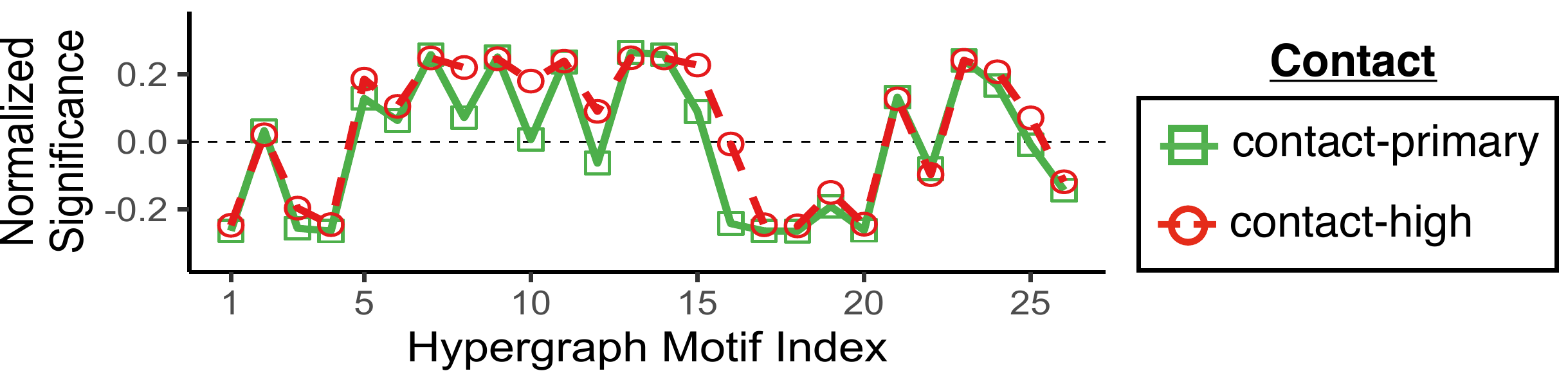}\\
	\includegraphics[width=0.48\textwidth]{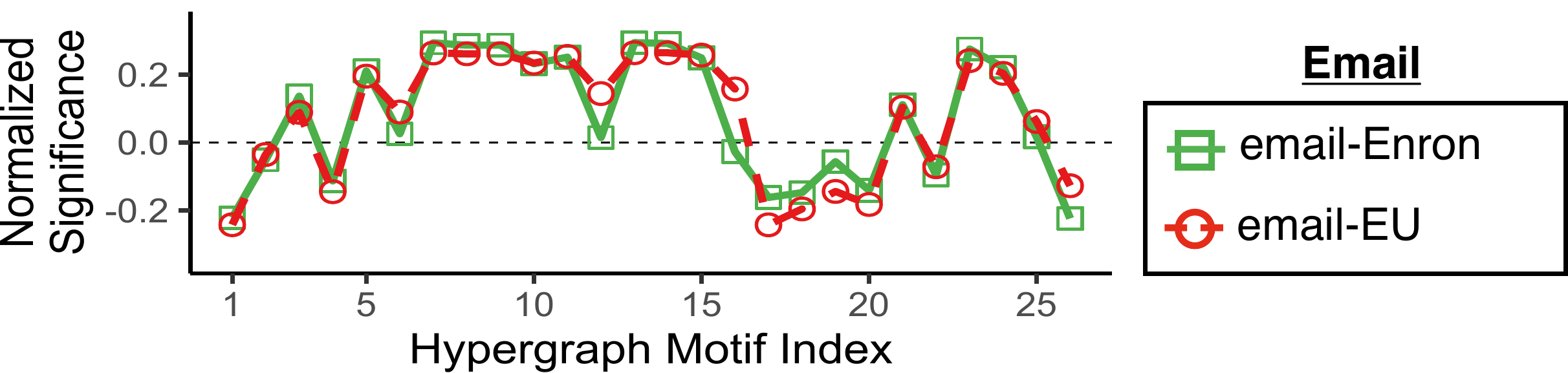}\\
	\includegraphics[width=0.48\textwidth]{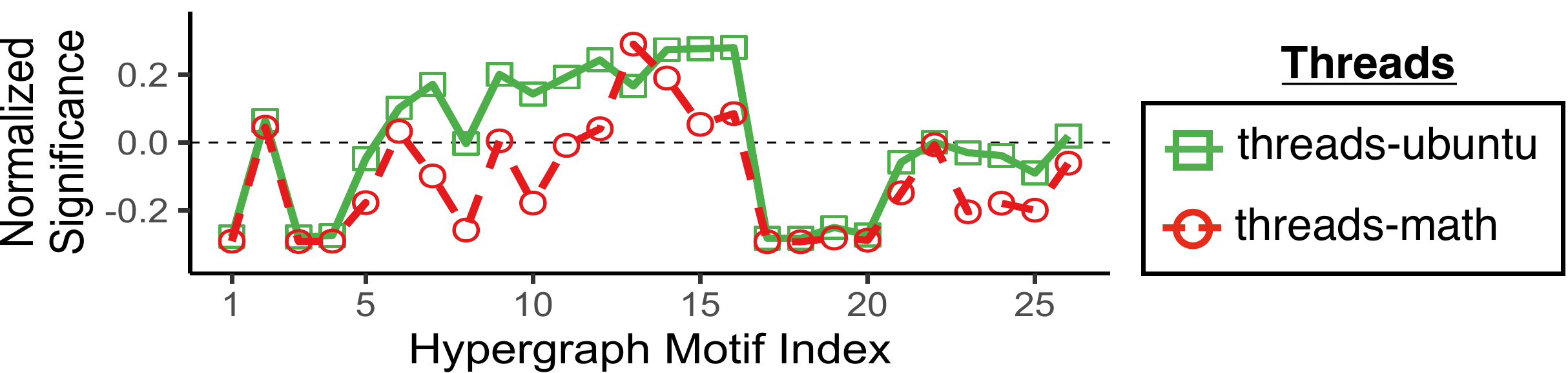}\\
	\vspace{-0.5mm}
	\caption{\label{fig:cp}
		Characteristic profiles (CPs) capture local structural patterns of real-world hypergraphs accurately.
		The CPs are similar within domains but different across domains. 
		Note that the significance of \motif 3 distinguishes 
		the contact hypergraphs from the email hypergraphs.
	}
\end{figure}

\begin{figure}[t!]
	\vspace{-4mm}
	\centering     
	\hspace{-2mm}
	\subfigure[Similarity matrix based on hypergraph motifs]{
		\includegraphics[width=0.557\columnwidth]{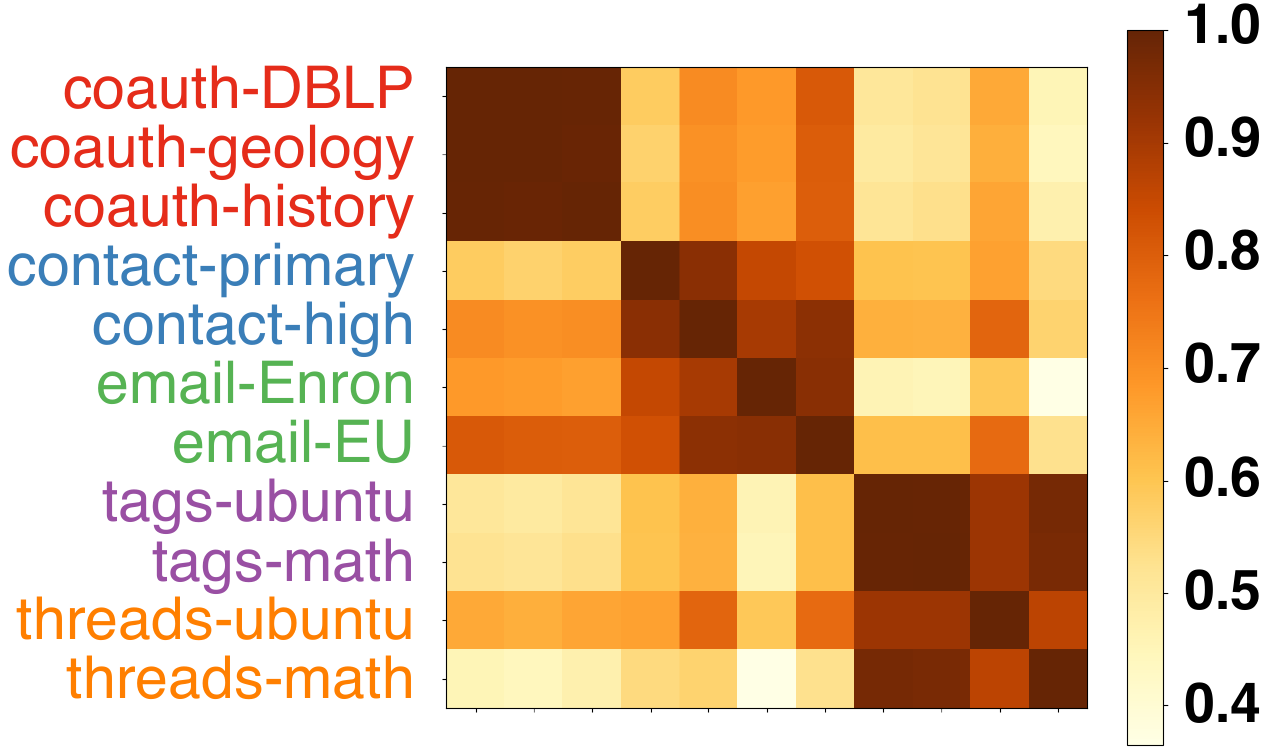}
	}
	\hspace{-2mm}
	\subfigure[Similarity matrix based on network motifs
	]{
		\includegraphics[width=0.403\columnwidth]{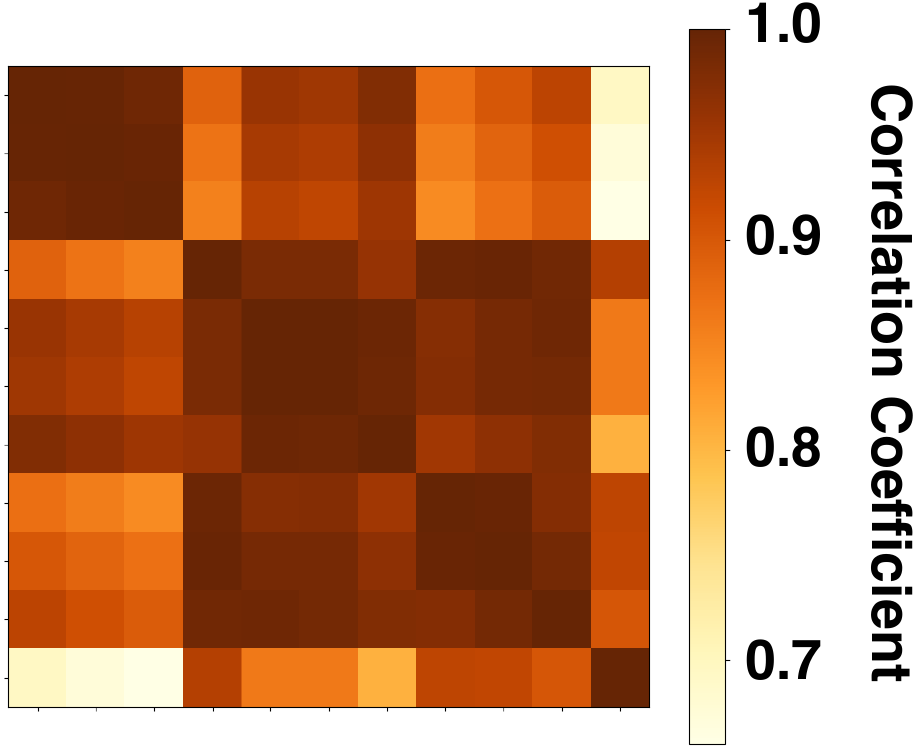}
	} \\
	\vspace{-2mm}
	\caption{\label{heatmap_fig}
		Characteristic profiles (CPs) based on hypergraph motifs (\motifs) capture local structural patterns more accurately than CPs based on network motifs.
		The CPs based on \motifs distinguishes the domains of the real-world hypergraphs better than the CPs based on network motifs. }
\end{figure}

\smallsection{\Motifs in Tags Hypergraphs:}
In addition to instances of \motif $11$, which are common in most real hypergraphs, 
instances of \motif $16$, where all seven regions are not empty (see Figure~\ref{motif_three_hyperedges}), are particularly frequent in both \texttt{tags} datasets than in corresponding random hypergraphs.

\smallsection{\Motifs in Threads Hypergraphs:}
Lastly, in both data sets from the \texttt{threads} domain, instances of \motifs $12$ and $24$ are noticeably more frequent than expected from the corresponding random hypergrpahs.

\change{In Appendix C.1 of \cite{full}, we analyze how the significance of each \motif and the rank difference for it are related to global structural properties of hypergraphs.}

%


\vspace{-0.5mm}
\subsection{Q2. Comparison across Domains}
\label{sec:exp:domain}
We compare the characteristic profiles (CPs) of the real-world hypergraphs.
In Figure~\ref{fig:cp}, we present the CPs (i.e., the significances of the $26$ \motifs) of each hypergraph.
As seen in the figure, hypergraphs from the same domains have similar CPs. 
Specifically, all three hypergraphs from the \texttt{co-authorship} domain share extremely similar CPs, even when the absolute counts of \motifs in them are several orders of magnitude different.
Similarly, the CPs of both hypergraphs from the \texttt{tags} domain are extremely similar.
However, the CPs of the three hypergraphs from the \texttt{co-authorship} domain are clearly distinguished by them of the hypergraphs from the  
\texttt{tags} domain.
While the CPs of the hypergraphs from the \texttt{contact} domain and the CPs of those from the \texttt{email} domain are similar for the most part, they are distinguished by the significance of \motif 3.
These observations confirm that CPs accurately capture local structural patterns of real-world hypergraphs.
\change{In Appendix~\ref{appendix:addExp-motifSignificance} of \cite{full}, we analyze the importance of each \motif in terms its contribution to distinguishing the domains.
}

To further verify the effectiveness of CPs based on \motifs, we compare them with CPs based on network motifs.
Specifically, we represent each hypergraph $G=(V,E)$ as a bipartite graph $G'=(V',E')$ where $V':=V\cup E$ and $E':=\{(v,e)\in V\times E: v\in e\}$, \change{which is also known as \textit{star expansion}~\cite{sun2008hypergraph}}. That is, the two types of nodes in the transformed bipartite graph $G'$ represent the nodes and hyperedges, respectively, in the original hypergraph $G$, and each edge $(v,e)$ in $G'$ indicates that the node $v$ belongs to the hyperedge $e$ in $G$.
Then, we compute the CPs based on the network motifs consisting of $3$ to $5$ nodes, using \cite{bressan2019motivo}.
Lastly, based on both CPs, we compute the similarity matrices (specifically, correlation coefficient matrices) of the real-world hypergraphs.
As seen in Figure~\ref{heatmap_fig}, the domains of the real-world hypergraphs are distinguished more clearly by the CPs based on \motifs than by the CPs based on network motifs.
Numerically, when the CPs based on \motifs are used, the average correlation coefficient is $0.978$ within domains and $0.654$ across domains, and the gap is $0.324$.
However, when the CPs based on network motifs are used, the average correlation coefficient is $0.988$ within domains and $0.919$ across domains, and the gap is just $0.069$.
These results support that \motifs play a key role in capturing local structural patterns of real-world hypergraphs.

%

\begin{figure*}[t]
	\vspace{-4mm}
	\centering     
	\subfigure[\label{fig:dblp:year} Fraction of the instances of each \motif in the coauth-DBLP dataset over time.]{\includegraphics[width=0.71\textwidth]{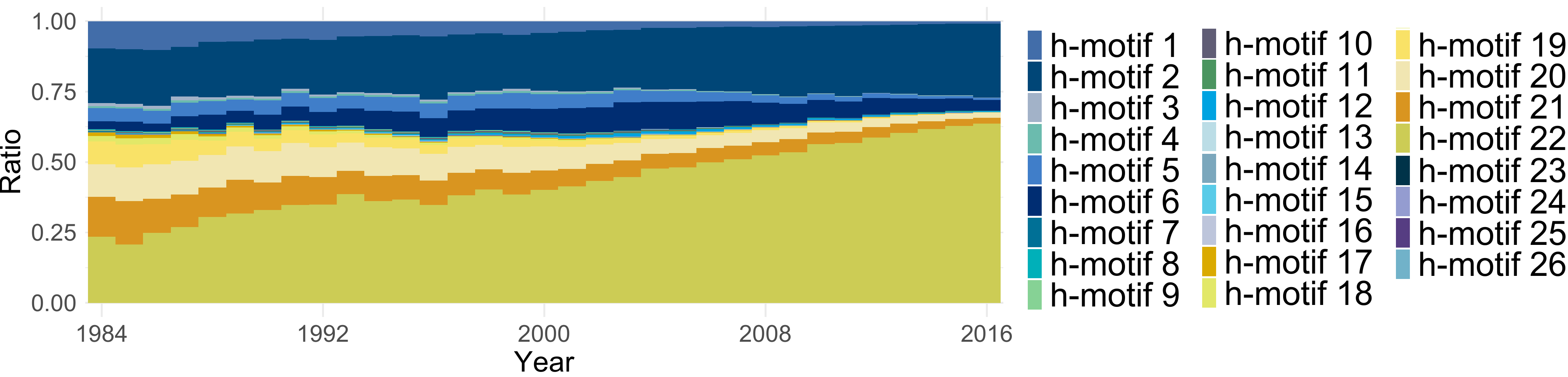}}
	\subfigure[\label{fig:dblp:openclosed} Open and closed \motifs.]{
		\includegraphics[width=0.01\textwidth]{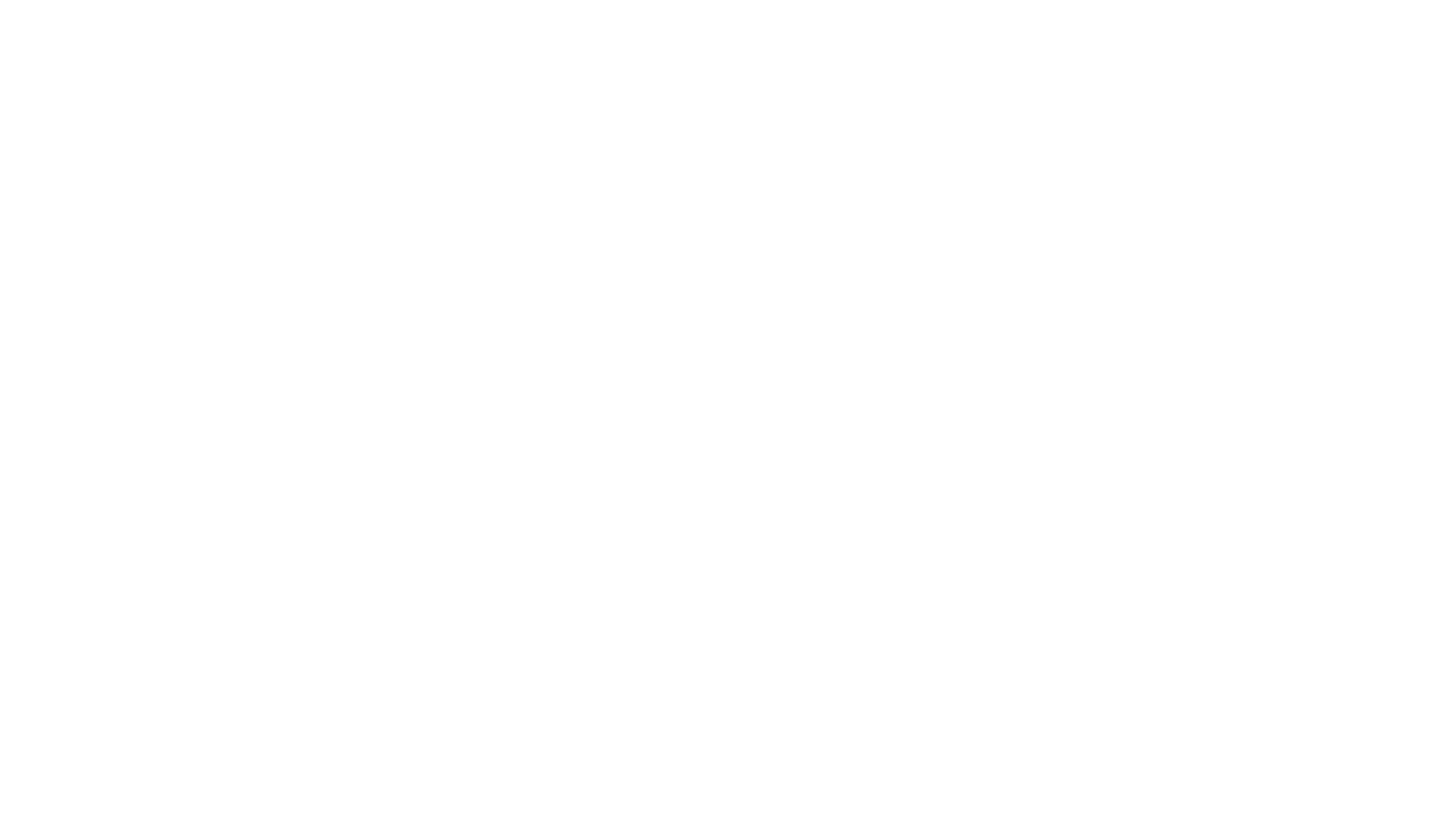}
		\includegraphics[width=0.24\textwidth]{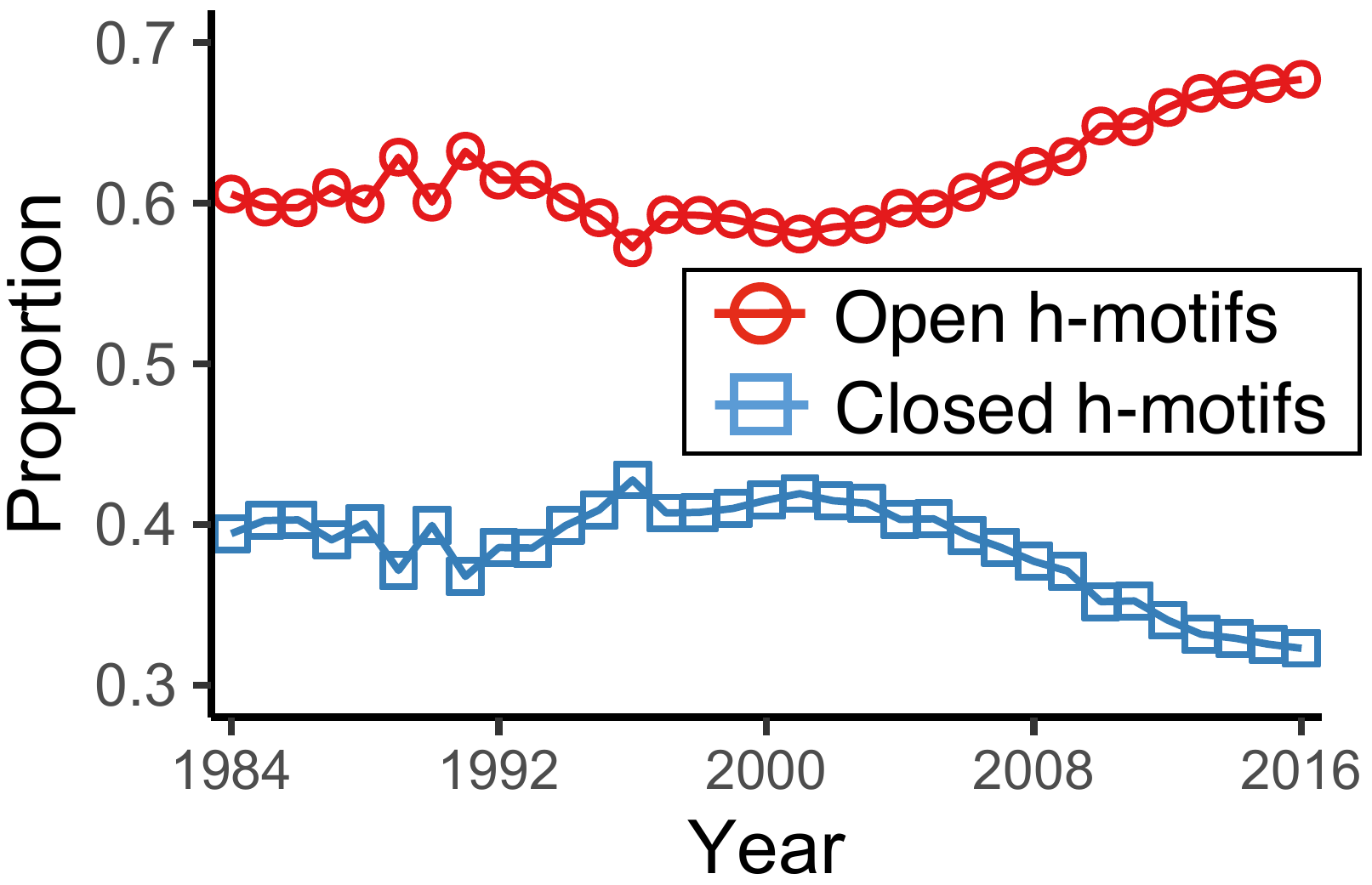}
		\includegraphics[width=0.01\textwidth]{FIG/empty.pdf}} \\
	\vspace{-2mm}
	\caption{\label{fig:dblp_year}\change{Trends in the formation of collaborations are captured by \motifs. (a) The fractions of the instances of \motifs 2 and 22 have increased rapidly.
	(b) The fraction of the instances of open \motifs has increased steadily since 2001. }}
\end{figure*}

\begin{figure*}[t]
	\vspace{-3mm}
	\centering     
	\includegraphics[width=0.09\textwidth]{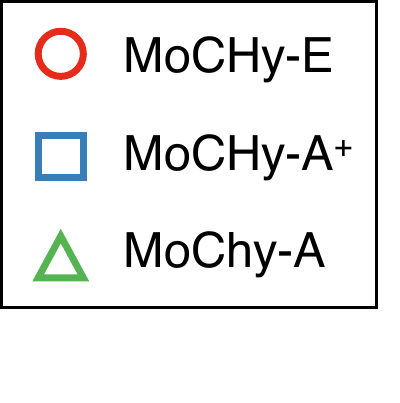} 
	\includegraphics[width=0.146\textwidth]{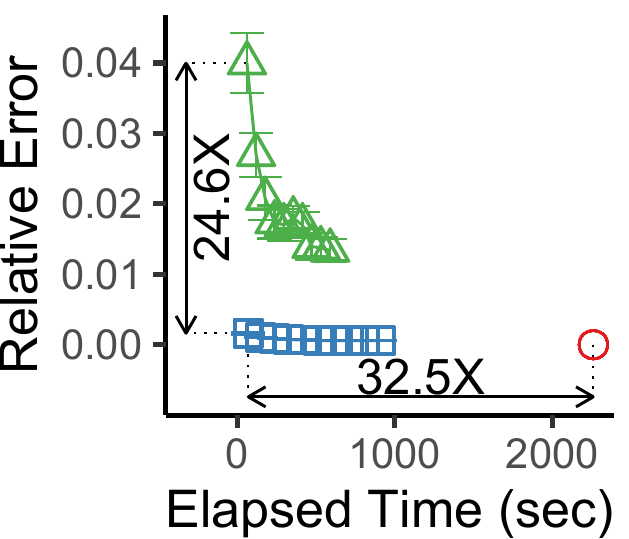}
	\includegraphics[width=0.146\textwidth]{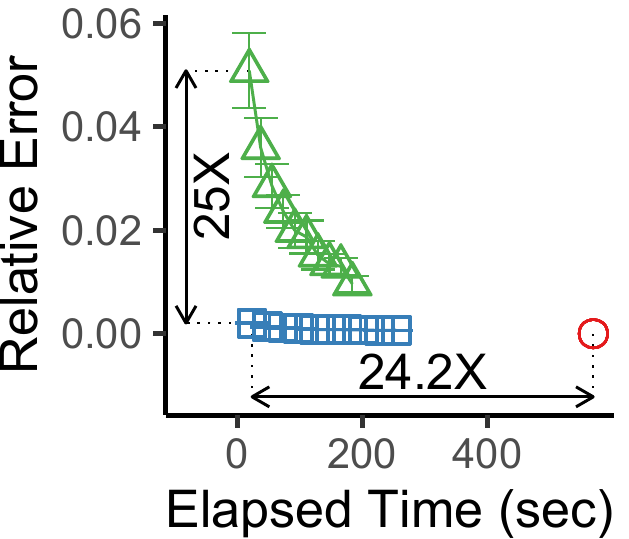}
	\includegraphics[width=0.146\textwidth]{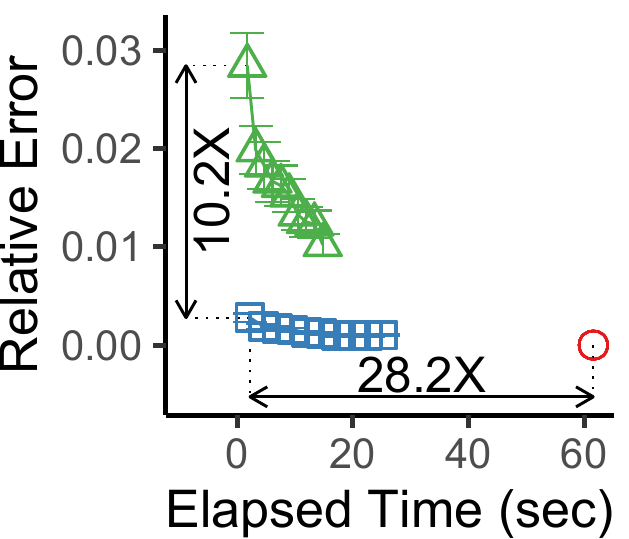}
	\includegraphics[width=0.146\textwidth]{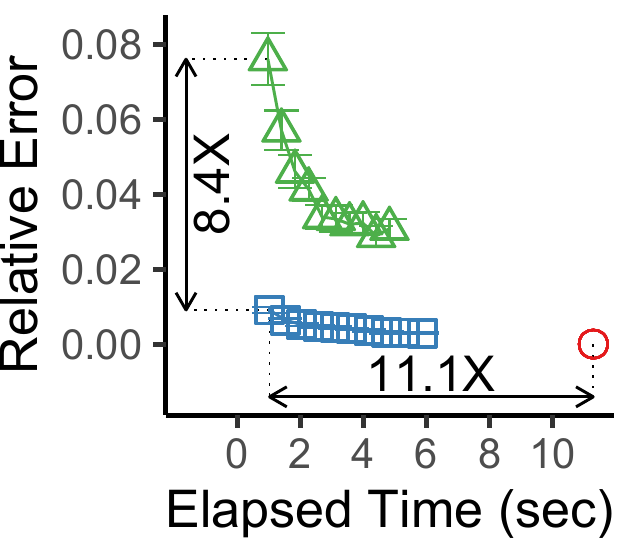}
	\includegraphics[width=0.146\textwidth]{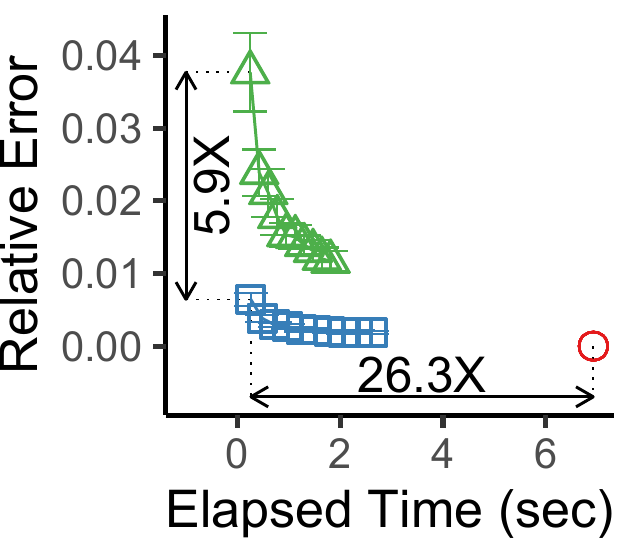}
	\includegraphics[width=0.146\textwidth]{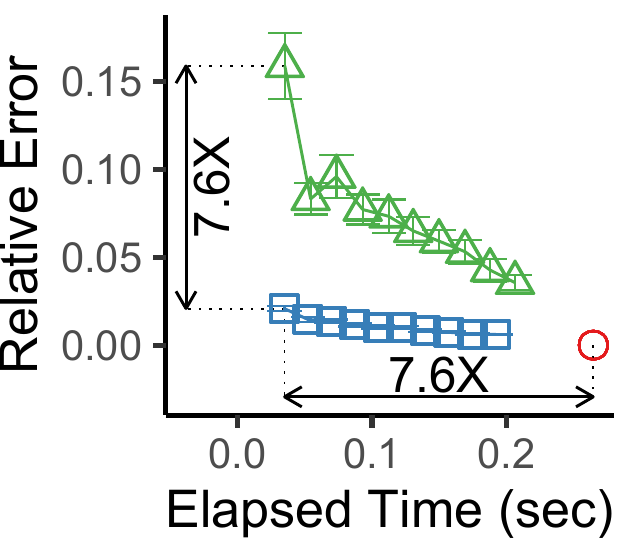} \\
	\begin{tabular}{ccccccc}
		~~~~~~~~~ & (a) threads-ubuntu & (b) email-Eu & (c) contact-primary & (d) coauth-history & (e) contact-high & (f) email-Enron \\
	\end{tabular}
	\caption{\label{fig:tradeoff}
		\methodAWX gives the best trade-off between speed and accuracy. It yields up to $25 \times$ more accurate estimation than \methodAEX, and it is up to $32.5 \times$ faster than \methodEX.
		The error bars indicate $\pm$ $1$ standard error over $20$ trials.
	}
\end{figure*}

\begin{table}[t]
	\vspace{-4mm}
	\begin{center}
		\caption{\label{prediction_table}
				\Motifs give informative features.
				Using them in HM26 or HM7 yields more accurate hyperedge predictions than using the baseline features in HC.}
		\scalebox{0.82}{
			\begin{tabular}{cc|ccc}
				\toprule
				\multicolumn{2}{c|}{} &  \textbf{HM26} &  \textbf{HM7} &  \textbf{HC}\\
				\midrule
				\multirow{2}{*}{ \textbf{Logistic Regression}} & \textbf{ACC$*$} & 0.754 & 0.656 & 0.636\\
				& \textbf{AUC$\dagger$} & 0.813 & 0.693 & 0.691\\
				\midrule
				\multirow{2}{*}{ \textbf{Random Forest}} & \textbf{ACC} & 0.768 & 0.741 & 0.639\\
				& \textbf{AUC} & 0.852 & 0.779 & 0.692\\
				\midrule
				\multirow{2}{*}{ \textbf{Decision Tree}} & \textbf{ACC} & 0.731 & 0.684 & 0.613\\
				& \textbf{AUC} & 0.732 & 0.685 & 0.616\\
				\midrule
				\multirow{2}{*}{ \textbf{K-Nearest Neighbors}} & \textbf{ACC} & 0.694 & 0.689 & 0.640\\
				& \textbf{AUC} & 0.750 & 0.743 & 0.684\\
				\midrule
				\multirow{2}{*}{ \textbf{MLP Classifier}} & \textbf{ACC} & 0.795 & 0.762 & 0.646\\
				& \textbf{AUC} & 0.875 & 0.841 & 0.701\\
				\bottomrule
				\multicolumn{5}{l}{$*$ acuracy, $\dagger$ area under the ROC curve. }
			\end{tabular}
		}
	\end{center}
	\vspace{-2mm}
\end{table}


%
\vspace{-0.5mm}
\subsection{Q3. Observations and Applications}
\label{sec:exp:observations}
We conduct two case studies on the \textit{coauth-DBLP} dataset, which is a \texttt{co-authorship} hypergraph.

\smallsection{Evolution of Co-authorship Hypergraphs:}\label{sec:exp:observations:evolution}  
The data-set contains bibliographic information of computer science publications.	
Using the publications in each year from $1984$ to $2016$, we create $33$ hypergraphs where each node corresponds to an author, and each hyperedge indicates the set of the authors of a publication.
Then, we compute the fraction of the instances of each \motif in each hypergraph to analyze patterns and trends in the formation of collaborations.
 As shown in Figure~\ref{fig:dblp_year}, over the 33 years, the fractions have changed with distinct trends.
 First, as seen in Figure~\ref{fig:dblp:openclosed}, the fraction of the instances of open \motifs has increased steadily since 2001, 
 indicating that collaborations have become less clustered, i.e., the probability that two collaborations intersecting with a collaboration also intersect with each other has decreased. 
 Notably, the fractions of the instances of \motif $2$ (closed) and \motif $22$ (open) have increased rapidly, accounting for most of the instances.

\smallsection{\change{Hyperedge Prediction:}\label{sec:exp:observations:prediction}} 
\change{
As an application of \motifs, we consider the problem of predicting publications (i.e., hyperedges) in $2016$ based on the publications from $2013$ to $2015$.
 As in~\cite{yoon2020much}, we formulate this problem 
 as a binary classification problem where we aim to classify real hyperedges and fake ones.
 To this end, we create fake hyperedges in both training and test sets by replacing some fraction of nodes in each real hyperedge with random nodes (see Appendix E of \cite{full} for details).
 Then, we train five classifiers using the following three different sets of input hyperedge features:
 \vspace{-0.5mm}
 \begin{itemize}[leftmargin=*]
	\itemsep-0.2em 
	\item \textbf{HM26} ($\in\mathbb{R}^{26}$): The number of each \motif's instances that contain each hyperedge.
	\item \textbf{HM7} ($\in\mathbb{R}^{7}$): The seven features with the largest variance among those in \textbf{HM26}. 
	\item \textbf{HC} ($\in\mathbb{R}^{7}$): The mean, maximum, and minimum degree\footnote{\scriptsize \change{The degree of a node $v$ is the number of hyperedges that contain $v$.}} and the mean, maximum, and minimum number of neighbors\footnote{\scriptsize \change{The neighbors of a node $v$ is the nodes that appear in at least one hyperedge together with $v$.}} of the nodes in each hyperedge and its size.
 \end{itemize}
 \vspace{-0.5mm}
We report the accuracy (ACC) and the area under the ROC curve (AUC) in each setting in Table~\ref{prediction_table}. 
Using \textbf{HM7}, which is based on \motifs, yields consistently better predictions than using an equal number of baseline features in \textbf{HC}.
Using \textbf{HM26} yields the best predictions.
These results indicate informative features can be obtained from \motifs.} \\


\vspace{-0.5mm}
\subsection{Q4. Performance of Counting Algorithms}
\label{sec:exp:algo}

We test the speed and accuracy of all versions of \method under various settings.
To this end, we measure elapsed time and relative error defined as
$$\frac{\sum_{t=1}^{26}|\MT-\MBT|}{\sum_{t=1}^{26}\MT} \text{  and  } \frac{\sum_{t=1}^{26}|\MT-\MHT|}{\sum_{t=1}^{26}\MT},$$
for \methodAE and \methodAW, respectively.

\smallsection{Speed and Accuracy:}
In Figure~\ref{fig:tradeoff}, we report the elapsed time and relative error of all versions of \method on the $6$ different datasets where \methodE terminates within a reasonable time.
The numbers of samples in \methodAE and \methodAW are set to $\{2.5\times k: 1\leq k \leq 10\}$ percent of the counts of hyperedges and \hwedges, respectively. \methodAW provides the best trade-off between speed and accuracy. For example, in the \textit{threads-ubuntu} dataset, \methodAW provides $24.6\times$ lower relative error than \methodAE, consistently with our theoretical analysis (see the last paragraph of Section~\ref{sec:method:approx}). Moreover, in the same dataset, \methodAW is $32.5\times$ faster than \methodE with little sacrifice on accuracy.

\smallsection{Effects of the Sample Size on CPs:}
In Figure~\ref{sensitivity_fig}, we report the CPs obtained by \methodAW with different numbers of hyperwedge samples on $3$ datasets.
Even with a smaller number of samples, the CPs are estimated near perfectly.


\smallsection{Parallelization:}
We measure the running times of \methodE and \methodAW with different numbers of threads on the \textit{threads-ubuntu} dataset. As seen in Figure~\ref{fig:par}, both algorithms achieve significant speedups with multiple threads. Specifically, with $8$ threads, \methodE and \methodAW ($r=1M$) achieve speedups of $5.4$ and $6.7$, respectively.

\smallsection{Effects of On-the-fly Computation on Speed:}
We analyze the effects of the on-the-fly computation of projected graphs (discussed in Section~\ref{sec:method:par_fly}) on the speed of \methodAW under different memory budgets for memoization.
To this end, we use the \textit{threads-ubuntu} dataset, and we set the memory budgets so that up to $\{0\%, 0.1\%, 1\%, 10\%, 100\%\}$ of the edges in the projected graph can be memoized.
The results are shown in Figure~\ref{fig:mem}.
As the memory budget increases, \methodAW becomes faster, avoiding repeated computation.
Due to our careful prioritization scheme based on degree, memoizing $1\%$ of the edges achieves speedups of about $2$.


\begin{figure*}[t]
	\centering     
	\vspace{-4.5mm}
	\includegraphics[width=0.155\textwidth]{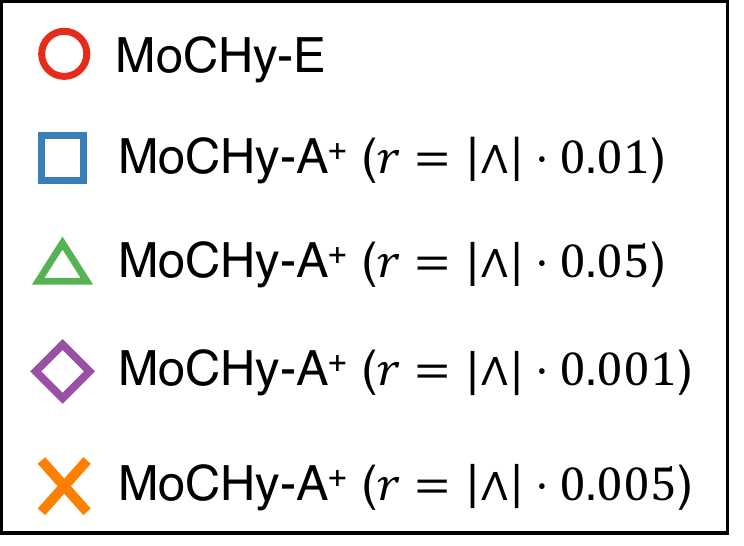}
	\subfigure[email-EU]{\includegraphics[width=0.276\textwidth]{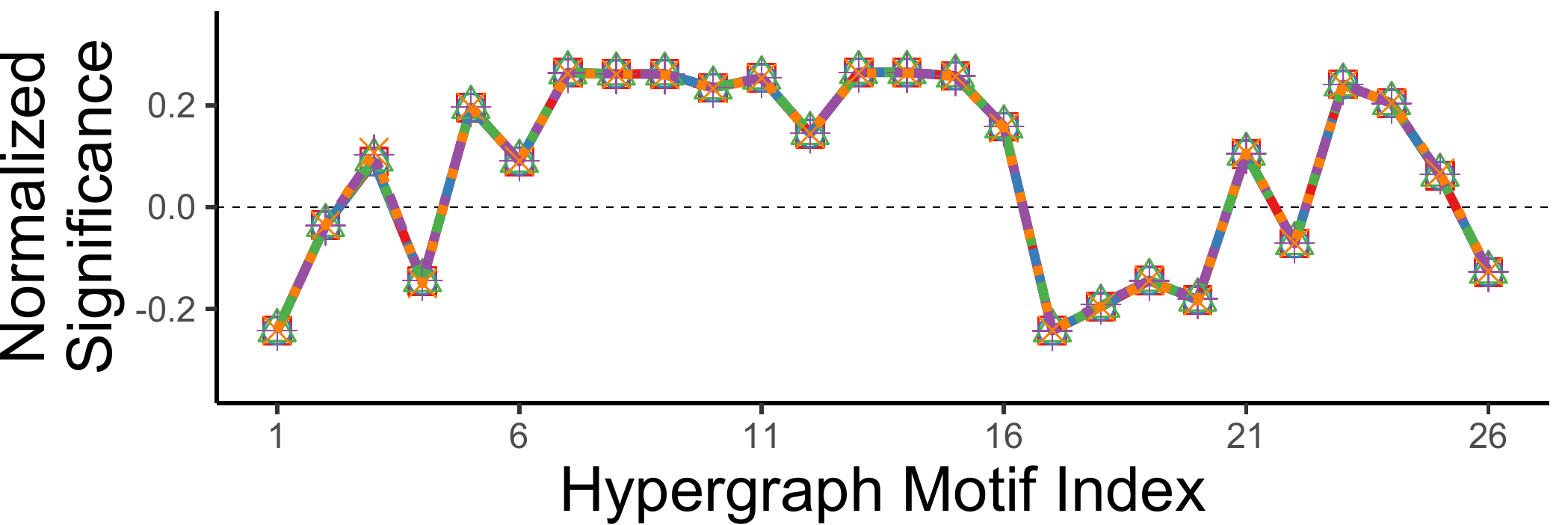}}
	\subfigure[contact-primary]{\includegraphics[width=0.276\textwidth]{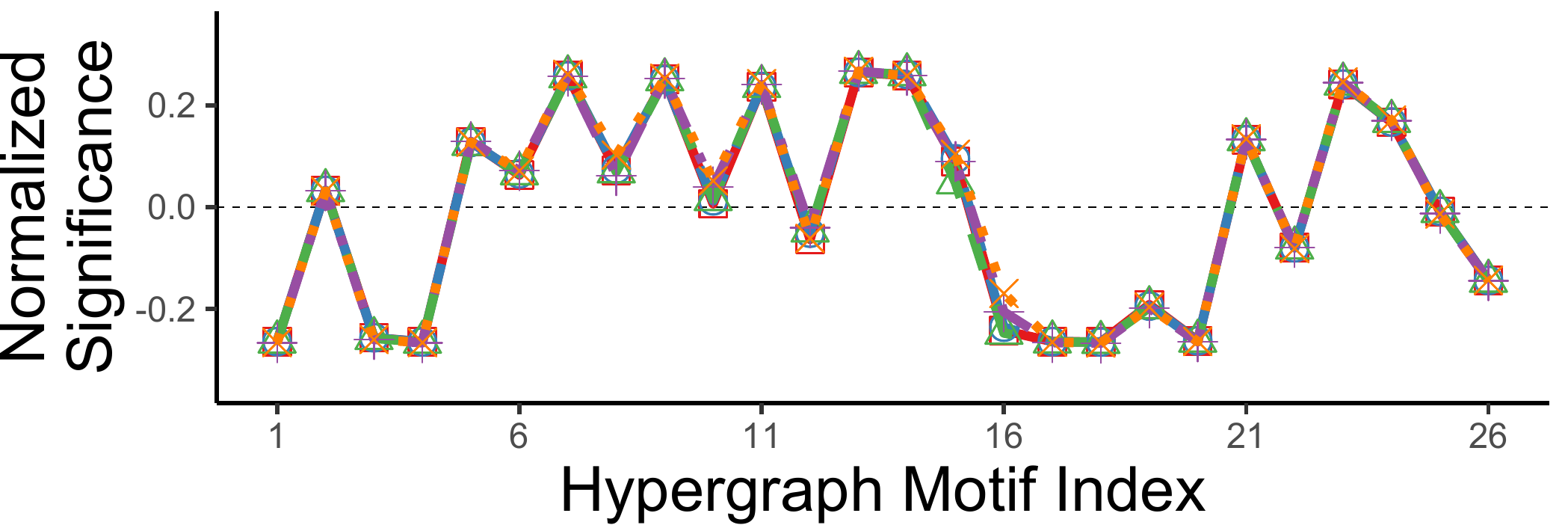}}
	\subfigure[coauth-history]{\label{sensitivity_fig:a}\includegraphics[width=0.276\textwidth]{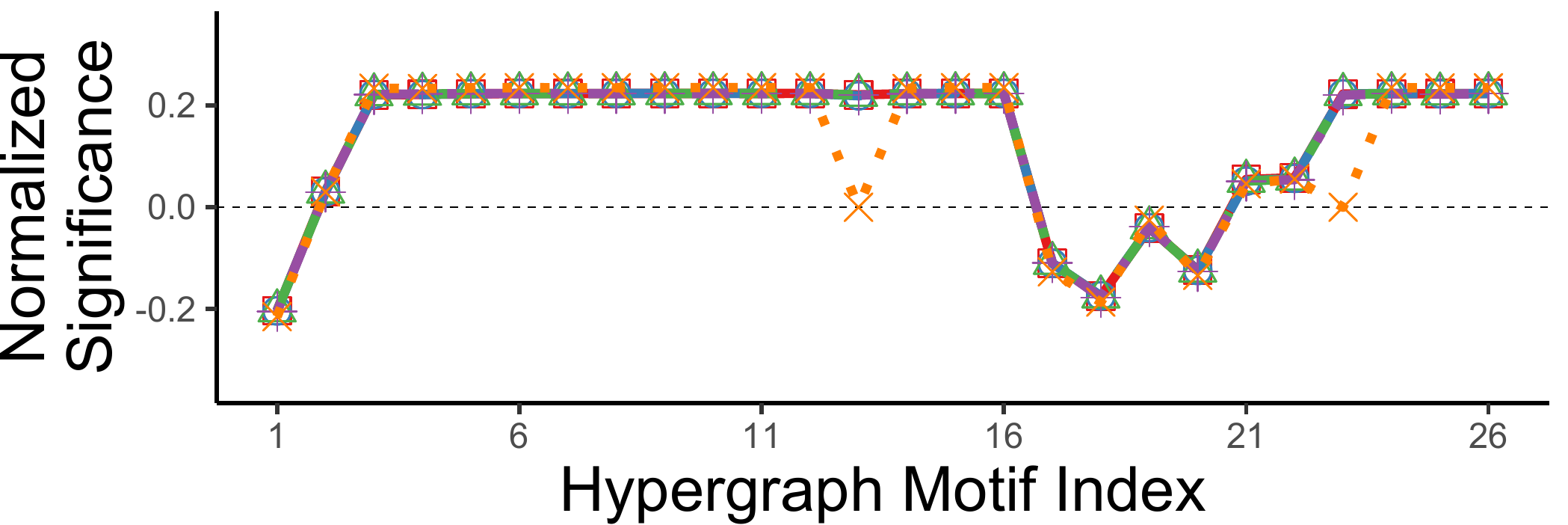}} \\
	\vspace{-2mm}
	\caption{\label{sensitivity_fig}Using \methodAWX, characteristic profiles (CPs) can be estimated accurately from a small number of samples.}
\end{figure*}

\vspace{-0.5mm}
\section{Related Work}
\label{sec:related}

We review prior work on network motifs, algorithms for counting them, and hypergraphs.
While the definition of a network motif varies among studies, here we define it as a connected graph composed by a predefined number of nodes. 

\smallsection{Network Motifs}. Network motifs were proposed as a tool for understanding the underlying design principles and capturing the local structural patterns of graphs \cite{holland1977method,shen2002network,milo2002network}. 
The occurrences of motifs in real-world graphs are significantly different from those in random graphs \cite{milo2002network}, and they vary also depending on the domains of graphs \cite{milo2004superfamilies}.
The concept of network motifs has been extended to various types of graphs, including dynamic \cite{paranjape2017motifs}, bipartite \cite{borgatti1997network}, and heterogeneous \cite{rossi2019heterogeneous} graphs. 
The occurrences of network motifs have been used in a wide range of graph applications: community detection \cite{benson2016higher,yin2017local,li2019edmot,tsourakakis2017scalable}, ranking \cite{zhao2018ranking}, graph embedding \cite{rossi2018higher,yu2019rum}, and graph neural networks \cite{lee2019graph}, to name a few.

\begin{figure}[t]
	\vspace{-0.5mm}
	\centering     
	\includegraphics[width=0.13\textwidth]{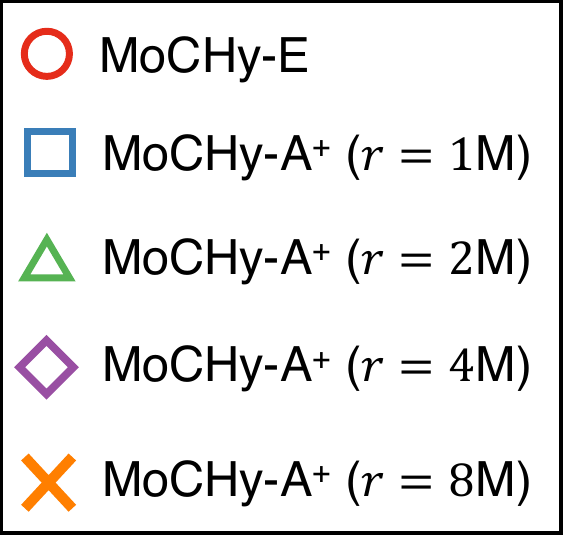} 
	\subfigure[\label{fig:par:time} Elapsed Time]{\includegraphics[width=0.16\textwidth]{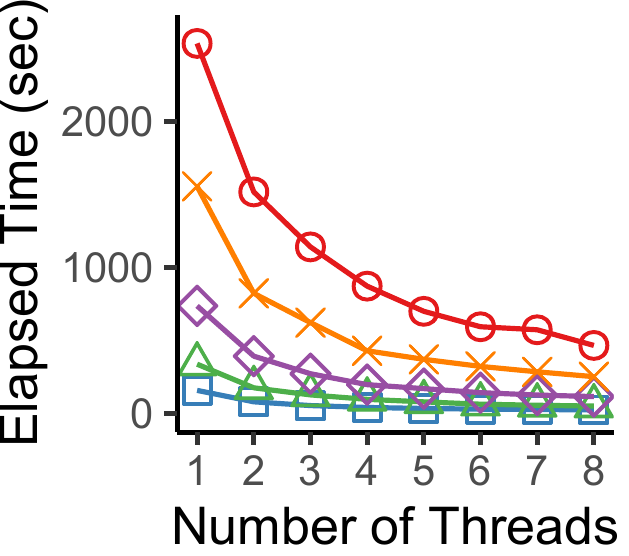}}
	\subfigure[\label{fig:par:speedup} Speedup]{\includegraphics[width=0.16\textwidth]{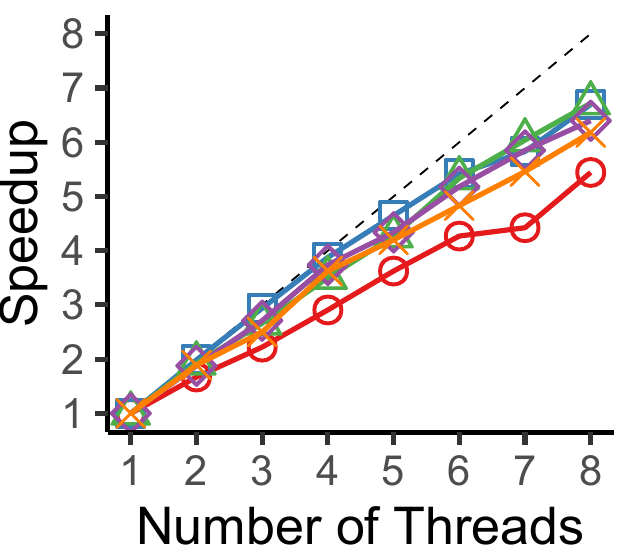}} \\
	\vspace{-2mm}
	\caption{\label{fig:par} Both \methodEX and \methodAWX achieve significant speedups with multiple threads.}
\end{figure}

\smallsection{Algorithms for Network Motif Counting.}
	We focus on algorithms for counting the occurrences of every network motif whose size is fixed or within a certain range \cite{ahmed2015efficient,ahmed2017graphlet,aslay2018mining,bressan2019motivo,chen2016general,han2016waddling,pinar2017escape}, while
	many are for a specific motif (e.g., the clique of size $3$) \cite{ahmed2017sampling,de2016triest,hu2013massive,hu2014efficient,jha2013space,kim2014opt,ko2018turbograph,pagh2012colorful,sanei2018butterfly,shin2017wrs,shin2020fast,tsourakakis2009doulion,wang2019rept,wang2017approximately}.	
	Given a graph,
	they aim to count rapidly and accurately the instances of motifs with $4$ or more nodes, despite the combinatorial explosion of the instances, using the following techniques:
	\vspace{-1.5mm}
	\begin{enumerate}[leftmargin=*]
		\itemsep-0.2em 
		{\setlength\itemindent{5pt}\item[(1)] {\bf Combinatorics:} For exact counting, \cite{ahmed2015efficient,pinar2017escape,paranjape2017motifs} employ combinatorial relations between counts.
		That is, they deduce the counts of the instances of motifs from those of other smaller or equal-size motifs.}
		{\setlength\itemindent{5pt}\item[(2)] {\bf MCMC:} Most approximate algorithms sample motif instances from which they estimate the counts.
		Employed MCMC sampling, \cite{bhuiyan2012guise,chen2016general,han2016waddling,saha2015finding,wang2014efficiently}
		perform a random walk over instances (i.e, connected subgraphs) until it reaches the stationarity
		 to sample an instance from a fixed probability distribution (e.g., uniform).}
		{\setlength\itemindent{5pt}\item[(3)] {\bf Color Coding:} Instead of MCMC, \cite{bressan2017counting,bressan2019motivo} employ color coding \cite{alon1995color}. Specifically, they color each node uniformly at random among $k$ colors, count the number of $k$-trees with $k$ colors rooted at each node, and use them to sample instances from a fixed probability distribution.}
	\end{enumerate}
	\vspace{-2mm}
	In our problem, which focuses on \motifs with only $3$ hyperedges, sampling instances with fixed probabilities is straightforward without (2) or (3), 
	and the combinatorial relations on graphs in (1) are not applicable.
	In algorithmic aspects, we address the computational challenges discussed at the beginning of Section~\ref{sec:method} by answering
	(a) what to precompute (Section~\ref{sec:method:projection}),
	(b) how to leverage it (Sections~\ref{sec:method:exact} and \ref{sec:method:approx}), and 
	(c) how to prioritize it (Sections~\ref{sec:method:par_fly} and \ref{sec:exp:algo}), with formal analyses (Lemma~\ref{lemma:motif:time}; Theorems~\ref{thm:exact:time},~\ref{thm:sampling_ver1:time}, and \ref{thm:sampling_ver2:time}).


\smallsection{Hypergraph}. 
Hypergraphs naturally represent group interactions  
occurring in a wide range of fields, including computer vision \cite{huang2010image,yu2012adaptive}, bioinformatics \cite{hwang2008learning}, circuit design \cite{karypis1999multilevel,ouyang2002multilevel}, social network analysis \cite{li2013link,yang2019revisiting}, and recommender systems \cite{bu2010music,li2013news}.
There also has been considerable attention on machine learning on hypergraphs, including clustering \cite{agarwal2005beyond,amburg2019hypergraph,karypis2000multilevel,zhou2007learning}, classification \cite{jiang2019dynamic,sun2008hypergraph,yu2012adaptive} and hyperedge prediction \cite{benson2018simplicial,yoon2020much,zhang2018beyond}.
Recent studies on real-world hypergraphs revealed interesting patterns commonly observed across domains, including global structural properties (e.g., giant connected components and small diameter) \cite{do2020multi} and temporal patterns regarding arrivals of the same or similar hyperedges \cite{benson2018sequences}.
Notably, Benson et al. \cite{benson2018simplicial} studied how several local features, including edge density, average degree, and probabilities of simplicial closure events for $4$ or less nodes\footnote{\scriptsize \change{The emergence of the first hyperedge that includes a set of nodes each of whose pairs co-appear in previous hyperedges. The configuration of the pairwise co-appearances affects the probability.}}, differ across domains.
Our analysis using \motifs is complementary to these approaches in that it (1) captures local patterns systematically without hand-crafted features, (2) captures static patterns without relying on temporal information, and (3) naturally uses hyperedges with any number of nodes without decomposing them into small ones. 

\begin{figure}[t]
	\vspace{-0.5mm}
	\centering     
	\includegraphics[width=0.13\textwidth]{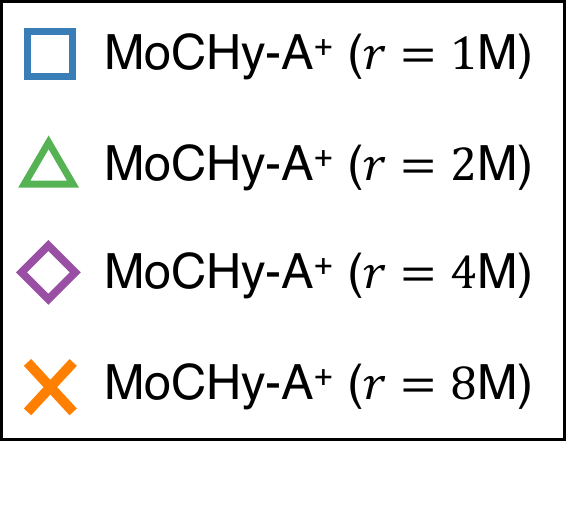} 
	\subfigure[\label{fig:mem:time} Elapsed Time]{\includegraphics[width=0.167\textwidth]{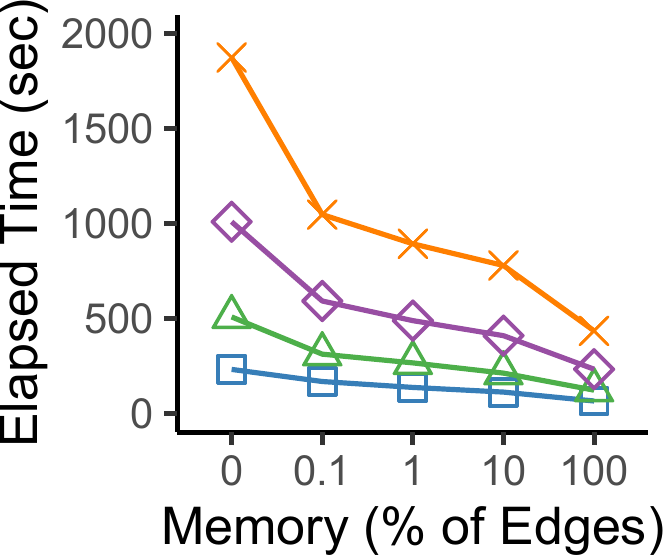}}
	\subfigure[\label{fig:mem:speedup} Speedup]{\includegraphics[width=0.16\textwidth]{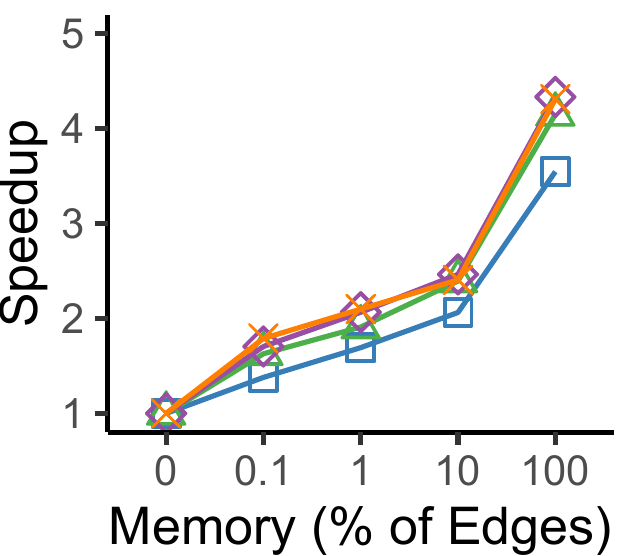}} \\
	\vspace{-2mm}
	\caption{\label{fig:mem}
		Memoizing a small fraction of projected graphs leads to significant speedups of \methodAWX.
	}
\end{figure}

 

\vspace{-0.5mm}
\section{Conclusions}
\label{sec:summary}

In this work, we introduce hypergraph motifs (\motifs), and using them, we investigate the local structures of $11$ real-world hypergraphs from $5$ different domains. We summarize our contributions as follows:
\vspace{-1mm}
\begin{itemize}[leftmargin=*]
	\itemsep-0.3em 
	\item {\bf Novel Concepts:} We define 26 \motifs, which describe connectivity patterns of three connected hyperedges in a unique and exhaustive way, independently of the sizes of hyperedges (Figure~\ref{motif_three_hyperedges}). 
	\item {\bf Fast and Provable Algorithms:} We propose $3$ parallel algorithms for (approximately) counting every \motif's instances, and we theoretically and empirically analyze their speed and accuracy. Both approximate algorithms yield unbiased estimates (Theorems~\ref{thm:sampling_ver1:accuracy} and \ref{thm:sampling_ver2:accuracy}), and especially the advanced one is up to $32\times$ faster than the exact algorithm, with little sacrifice on accuracy (Figure~\ref{fig:tradeoff}).
	\item {\bf Discoveries in $11$ Real-world Hypergraphs:} We confirm the efficacy of \motifs by showing that local structural patterns captured by them are similar within domains but different across domains (Figures \ref{fig:cp} and \ref{heatmap_fig}).
\end{itemize}
\vspace{-2mm}
\noindent{\bf Reproducibility:} The code and datasets used in this work are available at \url{https://github.com/geonlee0325/MoCHy}.


Future directions include  extending \motifs to rich hypergraphs, such as temporal or heterogeneous hypergraphs, and  incorporating \motifs into various tasks, such as hypergraph embedding, ranking, and clustering.

\begin{figure*}
	\vspace{-2mm}
\end{figure*}

{\small 
\smallsection{Acknowledgements.}
This work was supported by National Research Foundation of Korea (NRF) grant funded by the Korea government (MSIT) (No. NRF-2020R1C1C1008296).
This work was also supported by Institute of Information \& Communications Technology Planning \& Evaluation (IITP) grant funded by the Korea government (MSIT) (No. 2019-0-00075, Artificial Intelligence Graduate School Program (KAIST)) and supported by the National Supercomputing Center with supercomputing
resources including technical support (KSC-2020-INO-00\-04).}

\appendix

\vspace{-1mm}
\section{Proof of Theorem~2}
\label{sampling_ver1:proof}
We let $X_{ij}[t]$ be a random variable indicating whether the $i$-th sampled hyperedge (in line~\ref{sampling_ver1:sample} of Algorithm~\ref{sampling_ver1}) is included in the $j$-th instance of \motif $t$ or not. That is, $X_{ij}[t]=1$ if the hyperedge is included in the instance, and $X_{ij}[t]=0$ otherwise. We let $\mBT$ be the number of times that \motif $t$'s instances are counted while processing $s$ sampled hyperedges. That is,
\vspace{-1.5mm}
\begin{equation} 
\mBT := \sum_{i=1}^{s} \sum_{j=1}^{\MT}X_{ij}[t]. \label{eq:mbt}
\vspace{-1.5mm}
\end{equation}
Then, by lines~\ref{sampling_ver1:scale:start}-\ref{sampling_ver1:scale:end} of Algorithm~\ref{sampling_ver1}, 
\vspace{-1mm}
\begin{equation}
\MBT=\mBT\cdot \tfrac{|E|}{3s}. \label{eq:mbt:scale}
\vspace{-1mm}
\end{equation}

\smallsection{Proof of the Bias of $\MBT$ (Eq.~\eqref{sampling_ver1:bias}):}
	Since each \motif instance contains three hyperedges, the probability that each $i$-th sampled hyperedge is contained in each $j$-th instance of \motif $t$ is 
	\vspace{-1.5mm}
	\begin{equation} 
	P[X_{ij}[t]=1] = \mathbb{E}[X_{ij}[t]]=\tfrac{3}{|E|}. \label{eq:Xij:exp}
	\vspace{-0.5mm}
	\end{equation}
	From linearity of expectation,
	\vspace{-0.5mm}
	\begin{equation*} 
	\mathbb{E}[\mBT] = \sum_{i=1}^{s} \sum_{j=1}^{\MT}\mathbb{E}[X_{ij}[t]]= \sum_{i=1}^{s} \sum_{j=1}^{\MT}\frac{3}{|E|} = \frac{3s\cdot \MT}{|E|}. 
	\vspace{-0.5mm}
	\end{equation*}
	Then, by Eq.~\eqref{eq:mbt:scale}, $\mathbb{E}[\MBT] = \tfrac{|E|}{3s} \cdot \mathbb{E}[\mBT] = \MT$. \hfill $\qed$ \\

\vspace{-1.5mm}
\smallsection{Proof of the Variance of $\MBT$ (Eq.~\eqref{sampling_ver1:variance}):} From Eq.~\eqref{eq:Xij:exp} and $X_{ij}[t]=X_{ij}[t]^2$, the variance of $X_{ij}[t]$ is
	\vspace{-0.5mm}
	\begin{equation}
		\mathbb{V}\mathrm{ar}[X_{ij}[t]] = \mathbb{E}[X_{ij}[t]^2] - \mathbb{E}[X_{ij}[t]]^2 = \frac{3}{|E|} - \frac{9}{|E|^2}. \label{eq:Xij:var}
		\vspace{-1mm}
	\end{equation}
	Consider the covariance between $X_{ij}[t]$ and $X_{i'j'}[t]$. 
	If $i = i'$, then from Eq.~\eqref{eq:Xij:exp},
	\vspace{-1mm}
	\begin{align}
			& \mathbb{C}\mathrm{ov}(X_{ij}[t], X_{i'j'}[t]) = \mathbb{E}[X_{ij}[t]\cdot X_{ij'}[t]]-\mathbb{E}[X_{ij}[t]]\mathbb{E}[X_{ij'}[t]] \nonumber\\
			& = P[X_{ij}[t]=1, X_{ij'}[t]=1] -\mathbb{E}[X_{ij}[t]]\mathbb{E}[X_{ij'}[t]] \nonumber\\
			& = P[X_{ij}[t]=1]\cdot P[X_{ij'}[t]=1|X_{ij}[t]=1] \nonumber\\
			& \hspace{10pt} - \mathbb{E}[X_{ij}[t]]\mathbb{E}[X_{ij'}[t]] \nonumber\\
			& = \frac{3}{|E|}\cdot \frac{l_{jj'}}{3} - \frac{9}{|E|^2} = \frac{l_{jj'}}{|E|} - \frac{9}{|E|^2}, \label{eq:Xij:cov}
			\vspace{-1mm}
	\end{align}
	where $l_{jj'}$ is the number of hyperedges that the $j$-th and $j'$-th instances share.	
	However, since hyperedges are sampled independently (specifically, uniformly at random with replacement), if $i \neq i'$, then $\mathbb{C}\mathrm{ov}(X_{ij}[t], X_{i'j'}[t])=0$. 
	This observation, Eq.~\eqref{eq:mbt}, Eq.~\eqref{eq:Xij:var}, and Eq.~\eqref{eq:Xij:cov} imply 
	\vspace{-1mm}
	\begin{align*}
			&\mathbb{V}\mathrm{ar}[\mBT] = \mathbb{V}\mathrm{ar}[\sum_{i=1}^{s} \sum_{j=1}^{\MT}X_{ij}[t]] \nonumber\\
			& = \sum_{i=1}^{s} \sum_{j=1}^{\MT} \mathbb{V}\mathrm{ar}[X_{ij}[t]] + \sum_{i=1}^{s} \sum_{j \neq j'}\mathbb{C}\mathrm{ov}(X_{ij}[t], X_{ij'}[t]) \nonumber\\
			&=s \cdot \MT \cdot  (\frac{3}{|E|} - \frac{9}{|E|^2}) + s \sum_{l=0}^{2} p_l[t] (\frac{l}{|E|} - \frac{9}{|E|^2}),
	\vspace{-1mm}
	\end{align*}
	\vspace{-0.5mm}
	where $p_l[t]$ is the number of pairs of \motif $t$'s instances sharing $l$ hyperedges.
	This and Eq.~\eqref{eq:mbt:scale} imply Eq.~\eqref{sampling_ver1:variance}. $\qed$

\vspace{-2mm}
\section{Proof of Theorem 4}
\label{sampling_ver2:proof}
A random variable $Y_{ij}[t]$ denotes whether the $i$-th sampled \hwedge (in line~\ref{sampling_ver2:sample} of Algorithm~\ref{sampling_ver2}) is included in the $j$-th instance of \motif $t$.
That is, $Y_{ij}[t]=1$ if the sampled \hwedge is included in the instance, and $Y_{ij}[t]=0$ otherwise.
We let $\mHT$ be the number of times that \motif $t$' instances are counted while processing $r$ sampled \hwedges. That is,
\vspace{-2mm}
\begin{equation}
\mHT := \sum_{i=1}^{r} \sum_{j=1}^{\MT}Y_{ij}[t] \label{eq:mht}
\end{equation}
\vspace{-1mm}
We use $w[t]$ to denote the number of \hwedges included in each instance of \motif $t$.
That is,  
\vspace{-0.5mm}
\begin{equation}
w[t]:=\begin{cases}
2 & \text{if \motif $t$ is open,} \\
3 & \text{if \motif $t$ is closed.}
\end{cases}
\vspace{-0.5mm}
\end{equation}

Then, by lines~\ref{sampling_ver2:rescale:start}-\ref{sampling_ver2:rescale:end} of Algorithm~\ref{sampling_ver2}, 
\vspace{-0.5mm}
\begin{equation}
\MHT=
\mHT\cdot \tfrac{1}{w[t]}\cdot \tfrac{|\wedge|}{r}. \label{eq:mht:scale}
\vspace{-1mm}
\end{equation}


\smallsection{Proof of the Bias of $\MHT$ (Eq.~\eqref{sampling_ver2:bias}):} Since each instance of \motif $t$ contains $w[t]$ hyperwedges, the probability that each $i$-th sampled \hwedge is contained in each $j$-th instance of \motif $t$ is 
	\vspace{-0.5mm}
	\begin{equation}
	P[Y_{ij}[t]=1]=\mathbb{E}[Y_{ij}[t]]=\tfrac{w[t]}{|\wedge|}. \label{eq:Yij:exp}
	\vspace{-1mm}
	\end{equation}
	From linearity of expectation,
	\vspace{-0.5mm}
	\begin{equation*} 
		\mathbb{E}[\mHT] = \sum_{i=1}^{r} \sum_{j=1}^{\MT}\mathbb{E}[Y_{ij}[t]]= \sum_{i=1}^{r} \sum_{j=1}^{\MT}\frac{w[t]}{|\wedge|} = \frac{w[t]\cdot r\cdot \MT}{|\wedge|}.
	\vspace{-0.5mm}
	\end{equation*}
	Then, by Eq.~\eqref{eq:mht:scale}, $\mathbb{E}[\MHT] =\mathbb{E}[\mHT]\cdot\tfrac{1}{w[t]}\cdot \tfrac{|\wedge|}{r}=\MT$. $\qed$ \\

\smallsection{Proof of the Variance of $\MHT$ (Eq.~\eqref{sampling_ver2:variance:closed} and Eq.~\eqref{sampling_ver2:variance:open}:} \\
	From Eq.~\eqref{eq:Yij:exp} and $Y_{ij}[t]=Y_{ij}[t]^2$, the variance of each random variable $Y_{ij}[t]$ is
	\vspace{-1.5mm}
	\begin{equation}
	\mathbb{V}\mathrm{ar}[Y_{ij}[t]] = \mathbb{E}[Y_{ij}[t]^2] - \mathbb{E}[Y_{ij}[t]]^2 = \frac{w[t]}{|\wedge|} - \frac{w[t]^2}{|\wedge|^2}. \label{eq:Yij:var}
	\vspace{-1mm}
	\end{equation}
	Consider the covariance between $Y_{ij}[t]$ and $Y_{i'j'}[t]$.
	If $i = i'$, then from Eq.~\eqref{eq:Yij:exp},
	\vspace{-0.5mm}
	\begin{align}
	& \mathbb{C}\mathrm{ov}(Y_{ij}[t], Y_{i'j'}[t])  = \mathbb{E}[Y_{ij}[t]\cdot Y_{ij'}[t]]-\mathbb{E}[Y_{ij}[t]]\mathbb{E}[Y_{ij'}[t]] \nonumber\\
	& = P[Y_{ij}[t]=1, Y_{ij'}[t]=1] -\mathbb{E}[Y_{ij}[t]]\mathbb{E}[Y_{ij'}[t]] \nonumber\\
	& = P[Y_{ij}[t]=1]\cdot P[Y_{ij'}[t]=1|Y_{ij}[t]=1] -\mathbb{E}[Y_{ij}[t]]\mathbb{E}[Y_{ij'}[t]] \nonumber\\
	& = \frac{w[t]}{|\wedge|}\cdot \frac{n_{jj'}}{w[t]} - \frac{w[t]^2}{|\wedge|^2} = \frac{n_{jj'}}{|\wedge|} - \frac{w[t]^2}{|\wedge|^2} \label{eq:Yij:cov},
	\vspace{-0.5mm}
	\end{align}
	where $n_{jj'}$ is the number of \hwedges that the $j$-th and $j'$-th instances share.	
	However, 	
	since \hwedges are sampled independently (specifically, uniformly at random with replacement), if $i\neq i'$, then $\mathbb{C}\mathrm{ov}(Y_{ij}[t], Y_{i'j'}[t])=0$. This observation, Eq.~\eqref{eq:mht}, Eq.~\eqref{eq:Yij:var}, and Eq.~\eqref{eq:Yij:cov} imply 
	\vspace{-2mm}
	\begin{align*}
		&\mathbb{V}\mathrm{ar}[\mHT] = \mathbb{V}\mathrm{ar}[\sum_{i=1}^{r} \sum_{j=1}^{\MT}Y_{ij}[t]] \nonumber\\
		& = \sum_{i=1}^{r} \sum_{j=1}^{\MT} \mathbb{V}\mathrm{ar}[Y_{ij}[t]] + \sum_{i=1}^{r} \sum_{j \neq j'}\mathbb{C}\mathrm{ov}(Y_{ij}[t], Y_{ij'}[t]) \nonumber\\
		& = r \cdot \MT \cdot(\frac{w[t]}{|\wedge|} - \frac{w[t]^2}{|\wedge|^2}) + r \sum_{n=0}^{1}q_n[t] \cdot(\frac{n}{|\wedge|} - \frac{w[t]^2}{|\wedge|^2}), \label{eq:mht:var}
		\vspace{-1mm}
	\end{align*}
	where $q_n[t]$ is the number of pairs of \motif $t$'s instances that share $n$ \hwedges. 
	This and Eq.~\eqref{eq:mht:scale} imply Eq.~\eqref{sampling_ver2:variance:closed} and Eq.~\eqref{sampling_ver2:variance:open}.	\hfill $\qed$


\balance
\bibliographystyle{abbrv}
\bibliography{main}



\end{document}